\pgfplotsset{compat=1.17}
\tikzstyle{state} = [rectangle, rounded corners, minimum height=1cm,text centered, draw=black, fill=white, text width=2.8cm]
\newtheorem{definition}{Definition}[section]
\newtheorem{corollary}[definition]{Corollary}
\newtheorem{lemma}[definition]{Lemma}
\newtheorem{theorem}[definition]{Theorem}
\newcommand{\ite}{i.\,e.\xspace}
\newcommand{\dt}{\,\mathrm{d}t}
\newcommand{\ds}{\,\mathrm{d}s}
\newcommand{\R}{\mathbb{R}\xspace}
\newcommand{\@minipagerestore}{\setlength{\parskip}{\medskipamount}}
\newif\ifblackwhite
\newif\ifcolor
    \def\blue{blue}
    \def\orange{orange}
    \def\blue{black}
    \def\orange{black}
\title{An improved bound for the ground state energy of a Schrödinger operator on a loop}
\author{Helmut~Linde}
\date{\today}
\begin{document}

\maketitle

\begin{abstract}
    Consider a closed curve of length $2\pi$ with curvature $\kappa(s)$ and the Schrödinger operator $H$ with $\kappa^2$ as the potential term. Let $\lambda_\Gamma$ be the lowest eigenvalue of $H$. The Ovals Conjecture proposed by Benguria and Loss states that $\lambda_\Gamma \ge 1$. While the conjecture remains open, the present work establishes a new lower bound of $0.81$ on $\lambda_\Gamma$, improving on the best previous estimate of approximately $0.6085$. 
\end{abstract}

\section{Introduction}

Let $\Gamma$ be a closed curve of length $2\pi$ in $\R^n$ with the curvature $\kappa(s)$ as a function of the arc length.
Let $H_\Gamma$ be the Schr\"odinger operator
\begin{equation} \label{eq:HGamma}
    H_\Gamma = - \Delta  + \kappa^2(s) \quad \mathrm{in} \quad L^2([0,2\pi))
\end{equation}
with periodic boundary conditions and $\lambda_\Gamma$ its lowest eigenvalue. The topic of the present article is the `Ovals Conjecture' which states that this eigenvalue cannot be smaller than one.

The question about the minimal possible value of $\lambda_\Gamma$ has attracted attention since the year 2004, when Benguria and Loss established a connection between this problem and the Lieb-Thirring conjecture in one dimension \cite{BenguriaLoss:Connection}.
They also conjectured that $\lambda_\Gamma \ge 1$ for any $\Gamma$. The class ${\cal F}$ of putative minimizers of $\lambda_\Gamma$ contains the circle and certain point-symmetric oval loops. It can be extended to the edge case of a straight segment of length $\pi$, traversed back and forth, that becomes admissible in a more generalized setting. For all curves in ${\cal F}$ the equality $\lambda_\Gamma = 1$ holds, but it has not yet been proved that this is actually the smallest possible value of $\lambda_\Gamma$.

Several other authors have made contributions towards proving the conjecture of Benguria and Loss: Burchard and Thomas have demonstrated \cite{BurchardThomas:Isoperimetric} that the curves in ${\cal F}$ minimize $\lambda_\Gamma$ locally, \ite, there is no small variation around these curves that reduces $\lambda_\Gamma$.
Denzler \cite{Denzler:Existence} has shown that closed curves minimizing $\lambda_\Gamma$ exist and that these minimizers are planar, smooth, strictly convex curves. Throughout this paper we can therefore assume that $\Gamma$ has these properties without losing generality.

The highest global bound on the eigenvalue so far was established in the year 2006, showing that $\lambda_\Gamma > 0.6085$ for all possible curves that meet the specifications of the conjecture \cite{Linde:ALowerBound}. 
In the present article, that bound on $\lambda_\Gamma$ is improved to $0.81$.

\section{Result statement} \label{sec:Result}

The main result of this article is:
\begin{theorem} \label{theorem:main_results}
    Let $\Gamma$ be a smooth, strictly convex, closed curve of length $2\pi$ in the plane with curvature $\kappa(s)$ and $H_\Gamma$ the Schr\"odinger operator (\ref{eq:HGamma}).
    Then its lowest eigenvalue $\lambda_\Gamma$ satisfies the bound
    \begin{equation} \label{eq:main_result}
        \lambda_\Gamma > \inf_{\substack{\tilde{\nu} \in [0,1] \\ \Delta \in (0, \frac\pi2)}} \max(B_1(\tilde\nu, \Delta), B_2(\tilde\nu, \Delta)),
    \end{equation}
    where 
    \begin{align}
        B_1(\tilde\nu, \Delta) &= \left(1 + 2\tilde\nu G(\Delta)\right)^{-2}, \label{eq:def_B1}\\
        B_2(\tilde\nu, \Delta) &= 1 -  \left(2 - \sec^2 \frac\Delta 2\right)\left( 1-\left(2+2\tilde\nu G(\Delta)-\tilde\nu\right)^{-2}\right) \label{eq:def_B2}
    \end{align}
    with
    \begin{equation} \label{eq:def_G}
        G(\Delta) :=\left(1+\cot\frac\Delta 4\right)^{-2}.
    \end{equation}
\end{theorem}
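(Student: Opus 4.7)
The approach is to let $\psi > 0$ be the ground state of $H_\Gamma$, normalized so that $\max_s \psi = 1$, and to set $\nu := \min_s \psi \in (0,1]$; this $\nu$ will play the role of the parameter $\tilde\nu$ in the statement. By Denzler's reduction cited in the introduction I may assume $\Gamma$ is smooth, strictly convex and planar, so that $\kappa > 0$ and $\int_0^{2\pi}\kappa\,ds = 2\pi$. Writing $\theta(s) := \int_0^s \kappa\,d\sigma$ for the tangent angle, the Euler--Lagrange equation $-\psi'' + \kappa^2\psi = \lambda_\Gamma\psi$, evaluated at the extrema of $\psi$, forces $\kappa^2(s_{\max}) \le \lambda_\Gamma \le \kappa^2(s_{\min})$; the curve therefore splits into a low-curvature region containing $s_{\max}$ and a high-curvature region containing $s_{\min}$. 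I then introduce $\Delta \in (0,\pi/2)$ as a geometric descriptor of this decomposition, most naturally half of the angular spread (or arc length) of one of the two regions.

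For the first bound $B_1$, I would integrate the eigenvalue ODE across the low-curvature region and compare $\psi$ with the explicit solution of the free oscillator $-\phi'' = \lambda_\Gamma\phi$ with matching boundary conditions. Tracking the maximal rate at which $\psi$ can climb from $\nu$ back up to $1$, together with the worst-case phase matching at the boundary of that region, should yield $\sqrt{\lambda_\Gamma}\,(1 + 2\nu\,G(\Delta)) \ge 1$, in which the specific form $G(\Delta) = (1 + \cot(\Delta/4))^{-2}$ arises from a cosine boundary-value problem; rearranging gives $\lambda_\Gamma \ge B_1(\nu, \Delta)$. For the second bound $B_2$, I would use the closedness constraint $\int_0^{2\pi}e^{i\theta}\,ds = 0$, which makes $\cos\theta$ and $\sin\theta$ admissible test functions in the Rayleigh quotient after orthogonalization against $\psi$. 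The identity $\int\bigl((\cos\theta)'^2 + \kappa^2\cos^2\theta\bigr)\,ds = \int\kappa^2\,ds$ connects this Rayleigh quotient directly to $\int\kappa^2$; a lower estimate on the $L^2$-mass of $\cos\theta$ on the low-curvature region --- which produces the trigonometric factor $\sec^2(\Delta/2)$ via an integration against a cosine-profile over an arc of angular width tied to $\Delta$ --- leads to $\lambda_\Gamma \ge B_2(\nu, \Delta)$.

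Combining the two bounds gives $\lambda_\Gamma \ge \max\bigl(B_1(\nu,\Delta), B_2(\nu,\Delta)\bigr)$ for the $(\nu,\Delta)$ associated to $\Gamma$, and therefore $\lambda_\Gamma \ge \inf_{\tilde\nu,\Delta}\max(B_1,B_2)$. The strict inequality reflects that the infimum is only approached in a degenerate limit not realised by any admissible closed curve. The main obstacle lies in the derivation of $B_2$: the test function must be tuned so that the Rayleigh quotient, after orthogonalization against $\psi$, depends cleanly on only the two parameters $\nu$ and $\Delta$ and reproduces precisely the algebraic form in~\eqref{eq:def_B2}. Balancing the competing contributions from the low- and high-curvature regions while preserving the $\cos\theta/\sin\theta$ structure is the delicate step that ultimately yields the numerical value $0.81$ when the infimum is computed.
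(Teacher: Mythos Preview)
Your proposal is not a proof but a plausibility sketch, and the strategy it outlines does not match the mechanism that actually produces the bounds $B_1$ and $B_2$.

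First, your identification of the parameters is wrong. In the paper the variable $\tilde\nu$ is \emph{not} $\min_s\psi$; it is $\tilde\nu = 1 - 2\nu/\pi$ where $\nu$ measures the oscillation of the anti-periodic component $f$ of the inverse tangent-angle map $\phi^{-1}$ on the interval $[\pi/2,\pi]$. Likewise $\Delta$ is not the angular spread of a ``low-curvature region'' of $\Gamma$; it is $\iota_2-\iota_1$, the length of the interval on which the energy projection $I(t)$ (a Rayleigh quotient of the one-dimensional projections $h_t$ of the curve $(x,y)=(\psi\cos\phi,\psi\sin\phi)$) exceeds $1$. With the wrong meaning attached to both parameters there is no reason your ODE/test-function heuristics should reproduce the specific expressions $G(\Delta)=(1+\cot(\Delta/4))^{-2}$ and $2-\sec^2(\Delta/2)$.

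Second, the $B_2$ step is logically backwards. Using $\cos\theta$, $\sin\theta$ as test functions in the Rayleigh quotient gives an \emph{upper} bound on $\lambda_\Gamma$; orthogonalising against $\psi$ gives information about $\lambda_2$, not a lower bound on $\lambda_1$. The paper obtains $B_2$ from the \emph{Three Angles Lemma}: for any three angles $\alpha,\beta,\gamma$ the energy $E(x,y)$ is an explicit weighted combination of $I(\alpha),I(\beta),I(\gamma)$, and choosing $\alpha=\iota_1$, $\beta=\iota_2$ (where $I=1$) and $\gamma$ midway on the complementary arc yields the $\sec^2(\Delta/2)$ factor exactly. The factor $G(\Delta)$ in both $B_1$ and $B_2$ comes from comparing an \emph{upper} bound on the total variation of $f$ (Lemma~3.1) with a \emph{lower} bound forced by the vanishing of its first Fourier coefficients and the location of its zeros (Lemma~3.2, whose lengthy proof occupies Section~6). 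None of this machinery --- the $\phi^{-1}$ view, the energy projection $I(t)$, critical angles, the total-variation squeeze on $f$ --- appears in your outline, and the ODE-comparison and test-function ideas you propose do not substitute for it.
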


Despite its involved form, the argument on the r.h.s. of (\ref{eq:main_result}) ultimately represents a combination of elementary functions on the rectangular domain $[0,1]\times(0,\frac\pi2)$.
It is straightforward to evaluate the expression (\ref{eq:main_result}) with a computer and the result is 
\begin{equation}
    \lambda_\Gamma > 0.8246 \text{ (numerically).}
\end{equation}
Also see Figure \ref{fig:contour} for a visual representation of $B_1$ and $B_2$.

\begin{figure}[htbp]
  \centering
  \includegraphics[width=\linewidth]{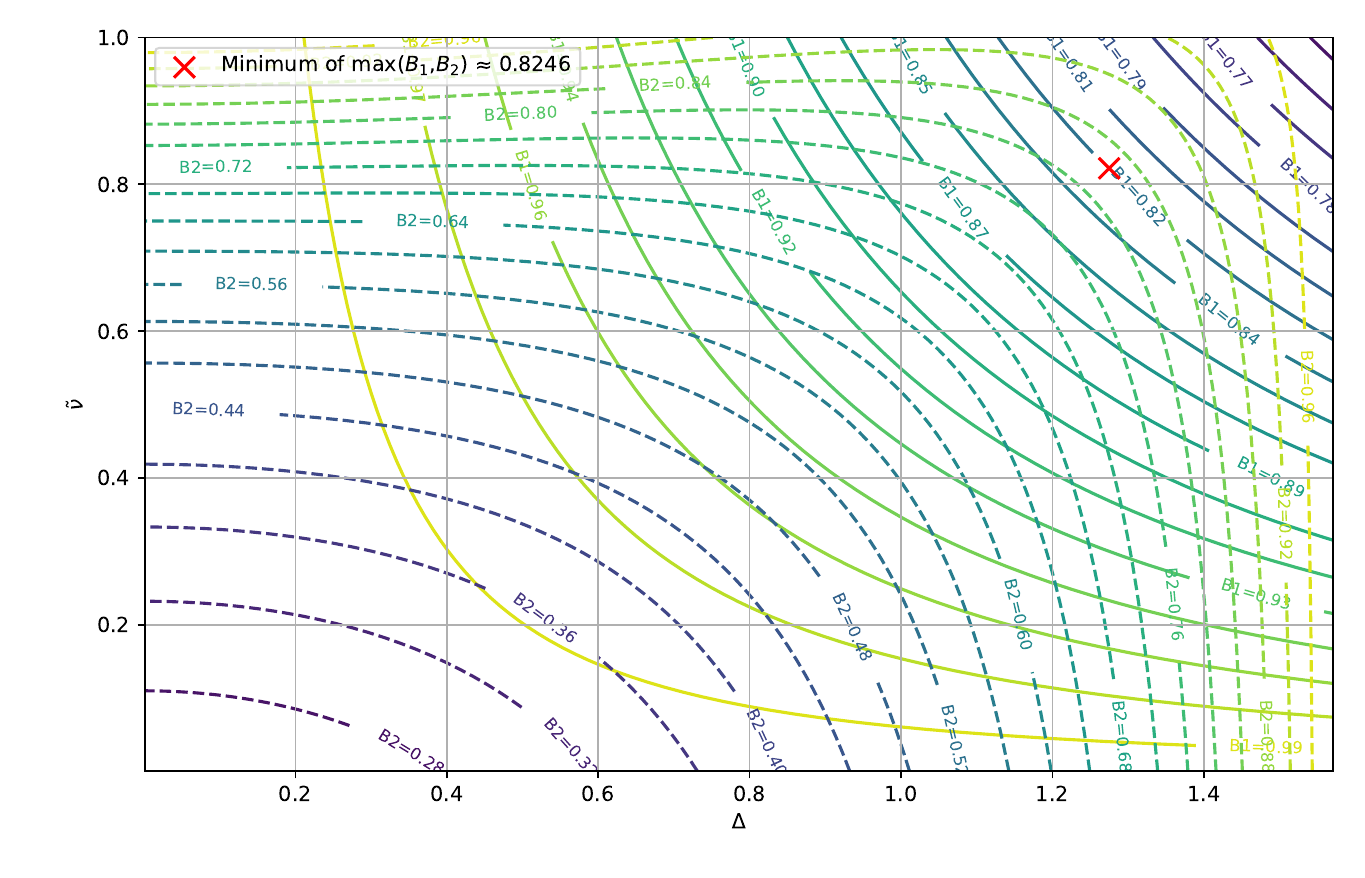}
  \caption{Superposition of contour plots of $B_1$ (solid lines) and $B_2$ (dashed lines). The X marks the point where $\max(B_1,B_2)$ is minimal.}
  \label{fig:contour}
\end{figure}
Alternatively, one can derive analytical bounds on $\lambda_\Gamma$ from Theorem \ref{theorem:main_results} to avoid the use of a computer. 
The function $B_1(\tilde\nu, \Delta)$ decreases strictly mono\-tonously with $\tilde\nu$ and with $\Delta$. Therefore, a crude but simple bound on $\lambda_\Gamma$ can be given based on $B_1$ alone:
\begin{align*}
    \lambda_\Gamma &> \inf_{\substack{\tilde{\nu} \in [0,1] \\ \Delta \in (0, \frac\pi2)}} B_1(\tilde\nu, \Delta) = \lim_{\Delta \rightarrow \frac\pi2} B_1(1, \Delta) \\
    &= \frac{3+2\sqrt 2}{8} \approx 0.7285 
\end{align*}
To obtain better estimates leveraging $B_1$ and $B_2$ jointly, there is a trade-off to be made between accuracy and simplicity of the analytical treatment. The following theorem and its proof are one example for such a trade-off:
\begin{theorem} \label{theorem:main_result2}
    The inequality (\ref{eq:main_result}) implies the weaker bound
    \begin{equation} \label{eq:lambda_greater_0.81}
        \lambda_\Gamma > 0.81
    \end{equation}
\end{theorem}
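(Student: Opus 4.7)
The plan is to show the pointwise bound $\max(B_1, B_2) \geq 81/100$ on the parameter rectangle $[0,1] \times (0, \pi/2)$; combined with the strict inequality in Theorem \ref{theorem:main_results} this will give $\lambda_\Gamma > 81/100 = 0.81$. The starting observation is that $B_1$ depends on $(\tilde\nu, \Delta)$ only through $\tilde\nu G(\Delta)$, and the equation $(1 + 2 x)^{-2} = 81/100$ has the unique positive solution $x = 1/18$; hence the curve $\tilde\nu G(\Delta) = 1/18$ is precisely the level set $\{B_1 = 81/100\}$. I will split the rectangle along this curve.

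On the sublevel region $\tilde\nu G(\Delta) \leq 1/18$ the monotonicity of $B_1$ in $\tilde\nu G(\Delta)$ gives $B_1 \geq 81/100$, so nothing further is needed. On the complementary region $\tilde\nu G(\Delta) > 1/18$, the plan is to show $B_2 \geq 81/100$. Two preparatory facts reduce this to a simple one-sided estimate. First, on this region the constraint $\tilde\nu \leq 1$ forces $G(\Delta) > 1/18$, and hence $1/(18 G(\Delta)) < 1$. Second, at fixed $\Delta$ the function $\tilde\nu \mapsto B_2(\tilde\nu, \Delta)$ is monotone increasing: rewriting $2 + 2\tilde\nu G(\Delta) - \tilde\nu = 2 - \tilde\nu(1 - 2G(\Delta))$ and using that $G(\Delta) < 1/2$ and $2 - \sec^2(\Delta/2) > 0$ throughout $(0, \pi/2)$, the monotonicity follows from a single sign check on the $\tilde\nu$-derivative. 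Taken together, these reductions leave only the task of verifying $B_2 \geq 81/100$ along the boundary curve $\tilde\nu = 1/(18 G(\Delta))$.

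On that curve $2 + 2\tilde\nu G(\Delta) - \tilde\nu = 19/9 - 1/(18 G(\Delta))$, which is at least $10/9$ thanks to $1/(18 G(\Delta)) \leq 1$. Therefore $\bigl(19/9 - 1/(18 G(\Delta))\bigr)^{-2} \leq (10/9)^{-2} = 81/100$, so the second factor in the correction term of (\ref{eq:def_B2}) is bounded above by $19/100$. Combining with the trivial bound $2 - \sec^2(\Delta/2) \leq 1$ gives $B_2 \geq 1 - 19/100 = 81/100$, completing the argument on the boundary and, via monotonicity, on the whole of the complementary region.

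I do not anticipate a serious technical obstacle: the entire argument hinges on the observation that $1/18$ is the threshold at which $B_1 = 81/100$, which in turn makes both factors in the correction term of $B_2$ automatically and tightly controllable on the corresponding boundary. The only step that is not completely mechanical is the monotonicity of $B_2$ in $\tilde\nu$, and this is itself an elementary derivative calculation.
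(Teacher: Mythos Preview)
Your overall strategy matches the paper's: split along the level set $\{B_1 = 81/100\}$, i.e.\ the curve $\tilde\nu G(\Delta) = 1/18$, use the monotonicity of $B_2$ in $\tilde\nu$ to reduce to that curve, and then bound $B_2$ there. The paper does exactly this.

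However, your final step contains a direction error that invalidates the argument. You correctly establish that on the level curve $19/9 - 1/(18G(\Delta)) \geq 10/9$ and hence $\bigl(19/9 - 1/(18G(\Delta))\bigr)^{-2} \leq 81/100$. But this yields the \emph{lower} bound
\[
1 - \bigl(19/9 - 1/(18G(\Delta))\bigr)^{-2} \;\geq\; 19/100
\]
for the second factor in the correction term of (\ref{eq:def_B2}), not the upper bound $\leq 19/100$ that you claim. With the first factor bounded \emph{above} by $1$ and the second factor bounded \emph{below} by $19/100$, no useful upper bound on their product follows, and the conclusion $B_2 \geq 81/100$ does not go through. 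Concretely, near $\Delta = \Delta_{\text{min}}$ the second factor equals exactly $19/100$ but the first factor is well below $1$; as $\Delta$ increases the second factor grows past $19/100$ while the first factor shrinks toward $0$. The actual minimum of $B_2$ along the curve sits at an interior point and equals approximately $0.8166$, so the two crude bounds cannot be combined pointwise.

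This trade-off is precisely why the paper's Section~\ref{sec:Proof2} takes a longer route: it replaces the secant term and the cotangent term in $G(\Delta)$ by tangent-line linearizations, reduces the restricted function $B_{2,\mathcal L}$ to a rational expression in a new variable, and ultimately locates its minimum by solving a depressed cubic via Cardano's formula, obtaining $\approx 0.8166 > 0.81$. Some genuine one-variable optimization on the level curve is unavoidable; the shortcut you propose does not exist.
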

The rest of the article is devoted to proving Theorem \ref{theorem:main_results} and Theorem \ref{theorem:main_result2}: In Section \ref{sec:Preliminaries}, some groundwork is laid by summarizing results from previous work, introducing the notation, and proving several lemmata. The main part of the proof of Theorem \ref{theorem:main_results} is given in Section \ref{sec:Proof} and this is where new methods of treating the Ovals Conjecture are developed. Section \ref{sec:Proof2} contains the purely technical proof of Theorem \ref{theorem:main_result2}, which is given here for completeness and to obtain the clean result (\ref{eq:lambda_greater_0.81}), but which adds very little to the understanding of the Ovals Conjecture as such. Finally, Section \ref{sec:proof_estimate_f} fills a gap in the proof of Theorem \ref{theorem:main_results} by establishing an important lemma where lengthy and, again, rather technical work is required.

\section{Preliminaries} \label{sec:Preliminaries}

\subsection{Notation and closure condition}
To prepare the proof of Theorem \ref{theorem:main_results}, this section introduces additional notation and collects some basic facts which have already been established in \cite{Linde:ALowerBound}.

For a given smooth curve $\Gamma$ as in Theorem $\ref{theorem:main_results}$ with an arc length parameter $s$ let $\phi(s)$ be the angle between the tangent on $\Gamma$ in $s$ and some fixed axis, which implies $\phi'(s) = \kappa(s)$. To keep the notation compact we write
$$\phi: \Omega \rightarrow \Omega \quad \mathrm{with}\quad\Omega := \R \slash 2\pi \mathbb Z,$$
considering numbers that differ by an integer multiple of $2\pi$ as identical.
Since the problem is invariant under a reflection of $\Gamma$ in the plane one can assume that $\phi' > 0$ without loss of generality.
For convenience, this and other assumptions made on $\Gamma$ throughout the article are summarized in Table \ref{tab:assumptions}.
The assumption that $\Gamma$ is closed implies
\begin{eqnarray} \label{eq:closure_condition}
&&\int_\Omega \cos \phi(s) \ds =  \int_\Omega \sin \phi(s) \ds = 0,\\
&&\int_\Omega \phi'(s) \ds = 2\pi.\nonumber
\end{eqnarray}
The eigenvalue $\lambda_\Gamma$ is the minimal value - for given $\phi$ - of the energy functional 
\begin{equation}
    E(\phi,\psi) := \frac{\int_\Omega (\psi'(s)^2 + \kappa(s)^2\psi(s)^2)\ds}{\int_\Omega\psi(s)^2\ds}
\end{equation}

It turns out that the curve $\Gamma$ and the lowest eigenfunction of the Schrö\-din\-ger operator $H_\Gamma$ can be characterized in two more ways, which will be called the \emph{$\phi^{-1}$ view} and the \emph{$x$/$y$ coordinates view}. The views are equivalent but in each of them some results are more accessible than in the other. One key idea for proving Theorem \ref{theorem:main_results} is to gather information from the different views and put them into contrast. Each view requires its own notation, which is introduced in the following two subsections.

\subsection{The $\phi^{-1}$ view}

The closure condition on $\Gamma$ can be expressed in a more convenient way in terms of the inverse function $\phi^{-1}: \Omega\rightarrow \Omega$, where the unpleasant integrals over $\cos\phi(s)$ and $\sin\phi(s)$ in (\ref{eq:closure_condition}) turn into the simple demand that the first non-trivial Fourier component of its derivative be zero:
\begin{eqnarray}
&&\int_\Omega (\phi^{-1})'(t) \sin t \dt =  \int_\Omega (\phi^{-1})'(t) \cos t \dt = 0,\\
&&\int_\Omega (\phi^{-1})'(t) \dt = 2\pi.\nonumber
\end{eqnarray}
The function $(\phi^{-1})'$ can therefore be written as a Fourier series
\begin{equation}
    (\phi^{-1})'(t) = 1 + \sum\limits_{n=2}^\infty n a_n \cos nt - n b_n \sin nt,
\end{equation}
such that
\begin{equation}\label{EqPhi}
\phi^{-1}(t) = C + t + g(t) + f(t),
\end{equation}
where $C\in\R$ is some integration constant and
\begin{eqnarray}
    g(t) &:=& \sum\limits_{n=2,4,6,...}^\infty a_{n} \sin nt + b_{n} \cos nt, \label{eq:DefOfG}\\
    f(t) &:=& \sum\limits_{n=3,5,7,...}^\infty a_{n} \sin nt + b_{n} \cos nt.\label{eq:DefOfF}
\end{eqnarray}
Note the periodicity properties
\begin{equation}\label{EqImparityOfF}
    f(t+\pi) = - f(t),\quad g(t+\pi) = g(t) \quad \textmd{for all}\quad t\in\Omega.
\end{equation}
The functions $f$ and $g$ are continuous, since a discontinuity in either of them at some point $t$  would lead to at least one discontinuity of $\phi^{-1}$ in $t$ or in $t+\pi$ due to the periodicity properties (\ref{EqImparityOfF}).

\subsection{Estimates on the total variation of $f$}

\begin{lemma}[Upper bound on total variation of $f$] \label{lem:upper_bound_variation_f}
    The total variation of $f$ satisfies the upper bound
    \begin{equation}\label{eq:upper_bound_variation_f}
        V(f) := \int_\Omega |f'(t)| \dt \le 2\pi.
    \end{equation}
\end{lemma}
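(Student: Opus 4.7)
The plan is to exploit the fact that $(\phi^{-1})'(t) > 0$ (which comes from strict convexity of $\Gamma$), together with the opposite behaviors of $f$ and $g$ under the shift $t \mapsto t+\pi$ recorded in (\ref{EqImparityOfF}). The point is that $f$ is $\pi$-antiperiodic while $g$ is $\pi$-periodic, so averaging the positivity condition at $t$ and at $t+\pi$ kills $f'$ and leaves only a condition on $1 + g'$, while subtracting does the opposite and yields a pointwise sandwich for $f'$ in terms of $g'$.

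Concretely, since $\phi' = \kappa > 0$ on $\Omega$, the inverse $\phi^{-1}$ is strictly increasing, so by (\ref{EqPhi})
\begin{equation*}
    1 + g'(t) + f'(t) = (\phi^{-1})'(t) > 0 \quad \text{for all } t \in \Omega.
\end{equation*}
Differentiating (\ref{EqImparityOfF}) gives $g'(t+\pi) = g'(t)$ and $f'(t+\pi) = -f'(t)$. Replacing $t$ by $t+\pi$ in the inequality above therefore yields
\begin{equation*}
    1 + g'(t) - f'(t) > 0,
\end{equation*}
and combining these two inequalities gives the pointwise bound $|f'(t)| \le 1 + g'(t)$.

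To conclude, I would use the antiperiodicity of $f'$ once more to fold the variation onto a half-period: substituting $u = t - \pi$ in the integral over $[\pi, 2\pi]$ and noting $|f'(u+\pi)| = |f'(u)|$,
\begin{equation*}
    V(f) = \int_0^{2\pi}\!|f'(t)|\dt = 2\int_0^\pi\! |f'(t)|\dt \le 2\int_0^\pi\! \bigl(1 + g'(t)\bigr)\dt = 2\pi + 2\bigl(g(\pi)-g(0)\bigr) = 2\pi,
\end{equation*}
where the last equality uses $g(\pi) = g(0)$, which is immediate from the $\pi$-periodicity of $g$.

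There is no genuine obstacle here: once one notices that the parity of the Fourier modes translates directly into the $\pi$-periodicity/antiperiodicity dichotomy of $g$ and $f$, the bound falls out of the monotonicity of $\phi^{-1}$ and a one-line integral estimate. The only small care required is to verify that the change of variables on $[\pi,2\pi]$ is legitimate and that $g$ being continuous (already noted after (\ref{EqImparityOfF})) justifies evaluating $g(\pi)-g(0)$ from the definite integral of $g'$.
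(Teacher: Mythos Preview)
Your proof is correct and follows essentially the same route as the paper: use the positivity of $(\phi^{-1})'$ at $t$ and at $t+\pi$ together with the $\pi$-periodicity/antiperiodicity of $g$ and $f$ to obtain the pointwise bound $|f'(t)| \le 1 + g'(t)$, then integrate. The only cosmetic difference is that the paper integrates directly over all of $\Omega$ and uses $\int_\Omega g'\,\dt = 0$, whereas you fold onto $[0,\pi]$ first and use $g(\pi)=g(0)$; the effect is identical.
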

\begin{proof}
    To see this, note that $(\phi^{-1})'(t) \ge 0$ since $\phi'(s)>0$. By (\ref{EqPhi}) this means
    \begin{equation*}
        f'(t) + g'(t) \ge -1 \quad \textmd{for all} \, t\in \Omega.
    \end{equation*}
    But, applying (\ref{EqImparityOfF}) to this inequality also implies
    \begin{equation*}
        -f'(t) + g'(t) \ge -1 \quad \textmd{for all} \, t\in \Omega.
    \end{equation*}
    By isolating $f'$ in the last two inequalities and comparing the results one gets
    \begin{equation*}
        |f'(t)| \le 1 + g'(t) \quad \textmd{for all} \, t\in \Omega.
    \end{equation*}
    Integrating over $\Omega$ and keeping in mind the periodicity of $g$ yields (\ref{eq:upper_bound_variation_f}).
\end{proof}

On the other hand, in the main part of the proof we will show that without losing generality the problem can be stated such that the function $f$ satisfies the conditions of the following lemma, yielding a lower bound for the total variation:

\begin{lemma}[Lower bound on total variation of $f$] \label{lem:lower_bounb_variation_f}
    Let $\hat f \in W^{1,1}(\Omega, \R)$, the Sobolev space of absolutely continuous, real-valued functions on $\Omega$, such that the weak derivative $\hat f' \in L^1(\Omega, \R)$. Further, assume that
    \begin{enumerate}
        \item  $\hat f(t + \pi) = -\hat f(t)$ for all $t\in\Omega$ (anti-periodicity),
        \item $\int_\Omega \hat f(t) \sin t\,\dt = \int_\Omega \hat f(t) \cos t\,\dt = 0$ (Fourier condition) and 
        \item  that $\hat f$ satisfies the following bounds and identities:
        \begin{eqnarray*}
            \hat f(t) &<& 0 \quad \text{ for } t \in [0,\tau_1),\\
            \hat f(\tau_1) &=& \hat f(\tau_2) = 0,\\
            \hat f(t) &>& 0 \quad  \text{ for } t \in [\tau_2,\frac{\pi}{2}),\\
            \hat f(t) &\ge& \delta \quad  \text{ for } t \in [\frac{\pi}{2},\pi]
        \end{eqnarray*}
        for some $\delta > 0$ and for $0 < \tau_1 < \tau_2 < \frac\pi2$.
    \end{enumerate}
    Define
    \[\nu := \max_{t \in [\frac\pi 2,\pi]} \hat f(t) - \delta.\]
    Then the total variation of $\hat f$ is bounded by the inequality
    \begin{equation} \label{eq:lower_bound_variation_f}
        V(\hat f) = \int_\Omega |\hat f'(t)| \dt > 4(\nu+\delta) + \frac{2\sqrt2\delta\sin(\frac{\tau_1 + \tau_2}{2} + \frac\pi4)}{\sin^2(\frac{\tau_2 - \tau_1}{4})}.
    \end{equation}
\end{lemma}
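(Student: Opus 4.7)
My plan is to prove the bound in three stages, combining anti-periodicity, the sign and value constraints on $\hat f$, and the Fourier condition. The anti-periodicity $\hat f(t+\pi)=-\hat f(t)$ immediately gives $V(\hat f)=2\int_0^\pi|\hat f'(t)|\dt$, and the pointwise bound $\hat f(\pi)\ge\delta$ yields $M:=\hat f(\pi)=-\hat f(0)\ge\delta$; after this reduction all hypotheses live on the single interval $[0,\pi]$.

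For the \emph{outer} contribution, I would use additivity $V_{[0,\pi]}(\hat f)=V_{[0,\tau_1]}+V_{[\tau_1,\tau_2]}+V_{[\tau_2,\pi]}$. The first summand is at least $|\hat f(\tau_1)-\hat f(0)|=M$; the last is at least $2(\nu+\delta)-M$, because $\hat f(\tau_2)=0$, $\hat f(\pi)=M\le\nu+\delta$, and $\hat f$ reaches the value $\nu+\delta$ somewhere in $[\pi/2,\pi]\subset[\tau_2,\pi]$, so it has to climb from $0$ up to $\nu+\delta$ and then descend to $M$. The $M$-contributions cancel, giving $V_{[0,\pi]}\ge 2(\nu+\delta)+V_{[\tau_1,\tau_2]}$, which after doubling produces the $4(\nu+\delta)$ piece of the claim.

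The decisive step --- and the main obstacle --- is to lower-bound $V_{[\tau_1,\tau_2]}$ with the correct constants. Set $\bar\tau:=(\tau_1+\tau_2)/2$ and $\rho:=(\tau_2-\tau_1)/2$. Combining the two Fourier identities yields $\int_0^\pi\hat f(t)\sin(t-\bar\tau)\dt=0$ (the reduction from $\Omega$ to $[0,\pi]$ uses anti-periodicity once more). On $[0,\tau_1]\cup[\tau_2,\pi]$ the factors $\hat f$ and $\sin(t-\bar\tau)$ carry matching signs, so that part of the integral is nonnegative; restricting further to $[\pi/2,\pi]$, where $\hat f\ge\delta$, yields $\int_{\pi/2}^\pi\hat f\sin(t-\bar\tau)\dt\ge\delta(\sin\bar\tau+\cos\bar\tau)=\sqrt2\,\delta\sin(\bar\tau+\pi/4)$. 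Consequently the inner contribution must compensate: $\bigl|\int_{\tau_1}^{\tau_2}\hat f\sin(t-\bar\tau)\dt\bigr|\ge\sqrt2\,\delta\sin(\bar\tau+\pi/4)$.

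To convert this into a bound on $V_{[\tau_1,\tau_2]}$ with the sharp denominator $\sin^2(\rho/2)$, I integrate by parts on $[\tau_1,\tau_2]$; the boundary term vanishes because $\hat f(\tau_1)=\hat f(\tau_2)=0$, leaving $\int_{\tau_1}^{\tau_2}\hat f'(t)\cos(t-\bar\tau)\dt$. The key subtlety is that $\int_{\tau_1}^{\tau_2}\hat f'\dt=0$, so I may replace $\cos(t-\bar\tau)$ by $\cos(t-\bar\tau)-c$ for an arbitrary constant $c$ before applying the standard $L^1$-$L^\infty$ inequality. Choosing the mid-range value $c=(1+\cos\rho)/2$ centres the test function and reduces its supremum over $[\tau_1,\tau_2]$ from $1$ to $(1-\cos\rho)/2=\sin^2(\rho/2)$, so that $\bigl|\int_{\tau_1}^{\tau_2}\hat f\sin(t-\bar\tau)\dt\bigr|\le V_{[\tau_1,\tau_2]}(\hat f)\sin^2(\rho/2)$. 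Combining with the lower bound gives $V_{[\tau_1,\tau_2]}\ge\sqrt2\,\delta\sin(\bar\tau+\pi/4)/\sin^2(\rho/2)$, and doubling produces the stated inequality. The strictness of the final estimate comes from the $L^1$-$L^\infty$ step: equality would force $\hat f'$ to concentrate on the measure-zero extremal set $\{\tau_1,\bar\tau,\tau_2\}$ of $|\cos(t-\bar\tau)-c|$, which is impossible for $\hat f\in W^{1,1}$ whenever the right-hand side is strictly positive.
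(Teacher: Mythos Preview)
Your proof is correct and takes a genuinely different, considerably shorter route than the paper's. The paper proves the lemma by a four-stage construction: starting from $\hat f$ it builds successive functions $f_1,f_2,f_3,f_4$ of decreasing complexity (cancelling signed parts, passing to step functions, removing a zero plateau, minimising over a two-parameter family), at each stage invoking a separate two-parameter ``balancing'' argument on a rectangle to show the Fourier condition can be maintained without increasing the total variation. The final extremal step function is then computed explicitly.

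Your argument bypasses all of this. You isolate the $4(\nu+\delta)$ term by a direct decomposition of the variation on $[0,\tau_1]\cup[\tau_2,\pi]$, and for the inner part you use the Fourier condition in the rotated form $\int_0^\pi \hat f(t)\sin(t-\bar\tau)\,dt=0$; the choice of phase $\bar\tau$ makes the outer contribution nonnegative and at least $\sqrt 2\,\delta\sin(\bar\tau+\pi/4)$. Integration by parts plus the key observation that $\int_{\tau_1}^{\tau_2}\hat f'=0$ lets you subtract the optimal constant before the $L^1$--$L^\infty$ estimate, producing exactly the sharp factor $\sin^2\bigl((\tau_2-\tau_1)/4\bigr)$ in the denominator.

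What each approach buys: the paper's construction identifies an explicit near-extremal step function, giving geometric intuition; your approach is a page-long analytic argument that needs no auxiliary constructions and no existence arguments for balancing parameters. One small remark: strictness follows more directly from Step~5 than Step~6, since the integral of $\hat f\sin(t-\bar\tau)$ over $[0,\tau_1)$ is strictly positive (both factors are strictly negative on a set of positive measure), making the inequality $\lvert\int_{\tau_1}^{\tau_2}\hat f\sin(t-\bar\tau)\,dt\rvert>\sqrt 2\,\delta\sin(\bar\tau+\pi/4)$ strict before any $L^1$--$L^\infty$ step. Your equality-case argument also works, but it is the harder route.
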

This lemma is proved in Section \ref{sec:proof_estimate_f}.

The r.h.s. of (\ref{eq:lower_bound_variation_f}) can be estimated further using
\[\sin\left(\frac{\tau_1 + \tau_2}{2} + \frac\pi4\right) \ge \sin\left(\frac{\tau_2 - \tau_1}{2} + \frac\pi4\right).\]
This bound can be proved via standard trigonometric identities, showing that the l.h.s. minus the r.h.s. of the inequality can be simplified to the expression
\[2 \cos\left(\frac{\tau_1}2+\frac\pi 4\right)  \sin\frac{\tau_2}2,\]
which is positive for the relevant range of $\tau_1, \tau_2.$
Applying the bound to (\ref{eq:lower_bound_variation_f}) yields the following corollary.
\begin{corollary}[Relaxed lower bound on $V(f)$] \label{cor:relaxed_bound_variation_f}
    Under the assumptions of Lemma \ref{lem:lower_bounb_variation_f}, the estimate on $V(\hat f)$ can be relaxed to:
    \begin{equation} \label{eq:relaxed_bound_variation_f}
        V(\hat f) > 4(\nu + \delta) +\frac{ 2\sqrt 2\delta \sin(\frac{\tau_2 - \tau_1}{2} + \frac\pi4)}{\sin^2(\frac{\tau_2 - \tau_1}{4})}.
    \end{equation}
\end{corollary}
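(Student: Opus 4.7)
The plan is to derive Corollary~\ref{cor:relaxed_bound_variation_f} by substituting a cleaner but weaker expression into the bound of Lemma~\ref{lem:lower_bounb_variation_f}. Concretely, I would first establish the elementary inequality
\[\sin\!\left(\frac{\tau_1+\tau_2}{2} + \frac{\pi}{4}\right) \;\ge\; \sin\!\left(\frac{\tau_2-\tau_1}{2} + \frac{\pi}{4}\right)\]
on the parameter range $0 < \tau_1 < \tau_2 < \frac{\pi}{2}$, and then plug this into (\ref{eq:lower_bound_variation_f}). Since the prefactor $2\sqrt 2\,\delta \,/ \sin^2(\frac{\tau_2-\tau_1}{4})$ is strictly positive, replacing the sine on the l.h.s.\ of the above by the smaller sine on the r.h.s.\ weakens the lower bound but preserves its validity, which gives exactly (\ref{eq:relaxed_bound_variation_f}).

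For the trigonometric inequality itself, the natural tool is the sum-to-product identity $\sin A - \sin B = 2\cos\frac{A+B}{2}\sin\frac{A-B}{2}$. Setting $A := \frac{\tau_1+\tau_2}{2} + \frac{\pi}{4}$ and $B := \frac{\tau_2-\tau_1}{2} + \frac{\pi}{4}$ yields $\frac{A+B}{2} = \frac{\tau_2}{2} + \frac{\pi}{4}$ and $\frac{A-B}{2} = \frac{\tau_1}{2}$, so a one-line computation reduces the difference to
\[\sin A - \sin B = 2\cos\!\left(\frac{\tau_2}{2} + \frac{\pi}{4}\right) \sin\frac{\tau_1}{2}.\]
Under the hypotheses $0 < \tau_1 < \tau_2 < \frac{\pi}{2}$, the factor $\sin(\tau_1/2)$ is strictly positive, and since $\frac{\tau_2}{2} + \frac{\pi}{4} \in (\frac{\pi}{4},\frac{\pi}{2})$ the factor $\cos(\tau_2/2+\pi/4)$ is strictly positive as well. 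Hence $\sin A > \sin B$, which is what was needed.

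There is essentially no obstacle here: the proof reduces to a single sum-to-product identity followed by a sign check. The only points worth verifying are that the original sine in (\ref{eq:lower_bound_variation_f}) is itself nonnegative on the given range—so that weakening it by the claimed amount indeed produces a valid (but less tight) bound—and that both arguments of the trigonometric functions stay inside intervals where the sign in question does not flip. Both facts follow immediately from $\tau_1,\tau_2\in(0,\pi/2)$, which confines every angle involved to a subinterval of $(0,\pi)$ where sine and cosine retain their signs as required. The corollary is then obtained by direct substitution into (\ref{eq:lower_bound_variation_f}).
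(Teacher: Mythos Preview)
Your proposal is correct and follows essentially the same route as the paper: reduce the sine difference via a sum-to-product identity and check the signs of the two factors on the admissible range, then substitute into (\ref{eq:lower_bound_variation_f}). In fact your factorization $\sin A - \sin B = 2\cos\!\left(\tfrac{\tau_2}{2}+\tfrac{\pi}{4}\right)\sin\tfrac{\tau_1}{2}$ is the correct one; the paper's displayed expression $2\cos\!\left(\tfrac{\tau_1}{2}+\tfrac{\pi}{4}\right)\sin\tfrac{\tau_2}{2}$ has the indices swapped (a harmless typo, since that product is also positive on the range in question).
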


Combining the two bounds on $V(f)$ from Lemma \ref{lem:upper_bound_variation_f} and Lemma \ref{lem:lower_bounb_variation_f}, one can derive estimates on $\delta$ and $\delta + \nu$:

\begin{corollary}[Dual Use Corollary] \label{cor:dual_use}
    Assume that the function $f$ as defined in (\ref{eq:DefOfF}) also satisfies the conditions of Lemma \ref{lem:lower_bounb_variation_f} when $\hat f \equiv f$. Let $\Delta$ be any number such that
    \[\tau_2 - \tau_1 \le \Delta \le \frac\pi2.\]
    Then
    \begin{equation} \label{eq:bound_delta_nu}
        \delta < (\pi -2\nu)G(\Delta)
    \end{equation}
    with $G(\Delta)$ as defined in Theorem \ref{theorem:main_results}.
\end{corollary}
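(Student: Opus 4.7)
The plan is to combine the universal upper bound $V(f)\le 2\pi$ from Lemma~\ref{lem:upper_bound_variation_f} with the structural lower bound from Corollary~\ref{cor:relaxed_bound_variation_f} and then perform a short trigonometric simplification. Writing $\eta:=(\tau_2-\tau_1)/2$, so that the hypothesis of the Dual Use Corollary reads $\eta\le\Delta/2$, chaining the two inequalities gives
\[
2\pi \;\ge\; V(f) \;>\; 4(\nu+\delta) + 2\delta\,F(\eta),\qquad F(\eta) := \frac{\sqrt 2\,\sin(\eta+\tfrac{\pi}{4})}{\sin^{2}(\eta/2)}.
\]
Rearranging for $\delta$ yields
\[
\delta \;<\; \frac{\pi-2\nu}{2+F(\eta)};
\]
as a useful by-product, the same chain forces $\pi-2\nu > 2\delta > 0$, so the right-hand side is manifestly positive.

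Next I would establish two tidy properties of $F$. Substituting $v=\eta/2$ and expanding $\sqrt 2\,\sin(2v+\pi/4)=\sin 2v+\cos 2v$ with the double-angle formulas, $F$ collapses to
\[
F(\eta) \;=\; \cot^{2}\tfrac{\eta}{2} + 2\cot\tfrac{\eta}{2} - 1,
\]
which is strictly decreasing in $\eta$ on the relevant range $(0,\pi/4]$ (since $\cot$ is positive and decreasing there). Because $\eta\le\Delta/2$, this gives $F(\eta)\ge F(\Delta/2)$. Second, and this is where $G(\Delta)$ materialises, the same identity evaluated at $\eta=\Delta/2$ becomes a perfect square:
\[
2+F(\Delta/2) \;=\; 1 + 2\cot\tfrac{\Delta}{4} + \cot^{2}\tfrac{\Delta}{4} \;=\; \Bigl(1+\cot\tfrac{\Delta}{4}\Bigr)^{2} \;=\; G(\Delta)^{-1}.
\]
Combining these two observations with the displayed bound on $\delta$ immediately gives $\delta < (\pi-2\nu)\,G(\Delta)$, which is the claim.

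There is no real obstacle here: the proof is essentially a two-line derivation, and its only non-routine ingredient is spotting that $2+F(\Delta/2)$ factors as $(1+\cot(\Delta/4))^{2}$. Once that algebraic identity is visible, monotonicity of $F$ and the chaining of the two variation-bound estimates complete the argument.
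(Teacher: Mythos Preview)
Your proof is correct and follows essentially the same route as the paper: chain the upper bound $V(f)\le 2\pi$ with the relaxed lower bound of Corollary~\ref{cor:relaxed_bound_variation_f}, use monotonicity in $\tau_2-\tau_1$ to pass to $\Delta$, and solve for $\delta$. The only difference is cosmetic: the paper replaces $\tau_2-\tau_1$ by $\Delta$ first and then says ``applying standard trigonometric identities'' to reach $G(\Delta)$, whereas you solve for $\delta$ first and then make those identities explicit, in particular the nice observation that $2+F(\Delta/2)=(1+\cot(\Delta/4))^2$.
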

\begin{proof}
    Comparing eq. (\ref{eq:upper_bound_variation_f}) from Lemma \ref{lem:upper_bound_variation_f} with eq. (\ref{eq:relaxed_bound_variation_f}) from Corollary \ref{cor:relaxed_bound_variation_f} directly gives
    \begin{eqnarray*}
        2\pi &>& 4(\nu + \delta) +\frac{ 2\sqrt 2\delta \sin(\frac{\tau_2 - \tau_1}{2} + \pi/4)}{\sin^2(\frac{\tau_2 - \tau_1}{4})}\\
        &\ge& 4(\nu + \delta) +\frac{ 2\sqrt 2\delta\sin(\frac\Delta 2 + \pi/4)}{\sin^2(\frac{\Delta}{4})}\\
    \end{eqnarray*}
    The last step above is justified because the r.h.s. of the inequality decreases monotonously with $\tau_2-\tau_1$ on the relevant interval, as can be confirmed by a standard differentiation.
    Solving for $\delta$ and applying standard trigonometric identities to simplify the resulting expression yields the statement of Corollary \ref{cor:dual_use}.
\end{proof}

\subsection{Critical points} \label{subsec:critical_points}

Call $s \in \Omega$ a
`critical point' of $\Gamma$ if $\phi(s + \pi) = \phi(s) + \pi$. Obviously, $s+\pi$ also is a critical point then. If
$s$ is a critical point, then $\phi(s)$ will be called a `critical angle'. Due to the periodicity properties of $f$ and $g$, the critical angles are identical to the zeros of $f$.

While open curves may have no critical points at all,
the following lemma holds for the closed curves we are considering:
\begin{lemma} \label{lem:CriticalPoints}
Every smooth closed curve $\Gamma$ has at least six critical points.
\end{lemma}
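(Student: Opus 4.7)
The plan is to reformulate the claim about critical points as a statement about zeros of $f$ and then apply a Fourier-analytic sign-change argument. If $f \equiv 0$ then $\phi(s+\pi) = \phi(s)+\pi$ for every $s \in \Omega$ and the lemma is immediate, so I henceforth assume $f \not\equiv 0$.

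Since $\phi : \Omega \to \Omega$ is a smooth bijection (because $\phi' > 0$), the critical points of $\Gamma$ are in one-to-one correspondence with the zeros of $f$ via $s \leftrightarrow \phi(s)$, as already observed in Subsection~\ref{subsec:critical_points}. Thus it suffices to show that $f$ has at least six zeros in $\Omega$. The crucial structural input is that the Fourier series~(\ref{eq:DefOfF}) of $f$ contains only the harmonics $\sin(nt), \cos(nt)$ with $n$ odd and $n \geq 3$. Consequently $f$ is $L^2(\Omega)$-orthogonal to each of $1$, $\sin t$, $\cos t$, $\sin 2t$, $\cos 2t$, and therefore to every trigonometric polynomial of degree at most two.

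I argue by contradiction. Assume $f$ has only $2k \leq 4$ sign changes on $\Omega$ (the count is automatically even, since $f$ is continuous on the circle), located at points $z_1,\dots,z_{2k}$. Define the test function
\[
h(t) := \prod_{j=1}^{2k} \sin\tfrac{t - z_j}{2}.
\]
Having an even number of factors makes $h$ genuinely $2\pi$-periodic, and iterating the product-to-sum identity $\sin\alpha\sin\beta = \tfrac12[\cos(\alpha-\beta)-\cos(\alpha+\beta)]$ shows that $h$ is a trigonometric polynomial of degree exactly $k \leq 2$. Each factor has a simple zero at one of the $z_j$ and nowhere else on $\Omega$, so $h$ changes sign at every $z_j$. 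After fixing the overall sign of $h$ suitably, the product $f\cdot h$ is non-negative on $\Omega$ and strictly positive on a set of positive measure, giving $\int_\Omega f h\,dt > 0$. This contradicts the orthogonality of $f$ to $\operatorname{span}\{1,\sin t,\cos t,\sin 2t,\cos 2t\}$. Therefore $f$ has at least six sign changes, hence at least six zeros, and $\Gamma$ has at least six critical points.

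The main technical step is verifying that the constructed $h$ really lies in the five-dimensional span above. This follows cleanly from the product-to-sum expansion, which combines each pair of half-frequency factors into a constant plus a first-harmonic cosine in $t$, so that $2k \leq 4$ factors produce a trigonometric polynomial of degree $k \leq 2$. Touch zeros of $f$ that are not sign changes play no role in the argument --- they only add to the total zero count, which can only strengthen the conclusion.
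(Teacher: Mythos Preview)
Your proof is correct and follows precisely the approach the paper indicates: the anti-periodicity of $f$ (which kills the even harmonics $1,\sin 2t,\cos 2t$) together with the vanishing of the first Fourier components (which kills $\sin t,\cos t$) puts $f$ in the orthogonal complement of all trigonometric polynomials of degree at most two, and the Sturm--Hurwitz sign-change argument then forces at least six zeros. The paper itself does not spell out the proof but defers it to \cite{Linde:ALowerBound}, describing it in exactly these terms; your write-up is a faithful and self-contained version of that argument.
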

The proof of this Lemma rests on the anti-periodicity of $f$ and the fact that its first Fourier components must vanish. It is given in detail in \cite{Linde:ALowerBound}. 

It is clear from the definition of a critical point and from Lemma \ref{lem:CriticalPoints}, that every $\Gamma$ has at least three critical points and three critical angles in $[s,s+\pi) \subset \Omega$ for any $s \in \Omega$. 

\subsection{$x$/$y$ coordinates view} \label{sec:xy_view}

For a given curve $\Gamma$ characterized by $\phi(s)$ and some test function $\psi(s) > 0$ in the domain of $H_\Gamma$ one can define the functions
\begin{equation} \label{eq:def_of_x_and_y}
    x(s) := \psi(s) \cos \phi(s), \quad y(s) := \psi(s) \sin \phi(s).
\end{equation}
In particular, since the ground state of $H_\Gamma$ can be chosen to be a positive function, it can also be written in the $x/y$ representation.
Interpreted as Euclidean coordinates, $x$ and $y$ define a closed curve in the plane. In these coordinates, the energy functional assumes the form
\begin{equation}\label{EqLambda}
    E(x,y) = \frac{\int_\Omega \left(x'(s)^2 + y'(s)^2 \right)\ds}{\int_\Omega \left(x(s)^2 + y(s)^2 \right)\ds}.
\end{equation}
In a slight abuse of notation, we will write $E(x,y)$ and $E(\psi)$ interchangeably.
Let
\begin{equation} \label{eq:def_h}
    h_t(s) :=  x(s) \sin t - y(s) \cos t
\end{equation} 
be the orthogonal projections of the curve $(x(s),y(s))$ onto straight lines through the origin. Then 
\begin{equation} \label{eq:def_I}
    I(t) := \frac{\int_\Omega h'_t(s)^2 \ds}{\int_\Omega h_t(s)^2 \ds},
\end{equation}%
will be called the \emph{energy projection} of $\Gamma$ at the angle $t$. It is straight-forward to check that $I(t)$ is $\pi$-periodic, positive, continuous, infinitely often differentiable and bounded.

The following lemma provides the pivotal connection between the $x$/$y$ coordinates view and the $\phi^{-1}$ view:
\begin{lemma} \label{lem:bound_on_I_from_f}
    Let $I(t)$ be the energy projection as defined in (\ref{eq:def_I}) and $f$ the anti-periodic component of $\phi^{-1}$ as defined in (\ref{EqPhi}). Then
    \begin{equation}\label{eq:bound_on_I_from_f}
        I(t) \ge \left(1+\frac{2 |f(t)|}{\pi}\right)^{-2}.
    \end{equation}
\end{lemma}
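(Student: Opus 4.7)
The plan is to exploit the factorization $h_t(s)=\psi(s)\sin(t-\phi(s))$, which follows directly from (\ref{eq:def_of_x_and_y}) and (\ref{eq:def_h}) via the angle-subtraction formula. Since $\psi>0$ and $\phi:\Omega\to\Omega$ is a smooth bijection with $\phi'>0$, the function $h_t$ has exactly two zeros in $\Omega$, namely $s_1=\phi^{-1}(t)$ and $s_2=\phi^{-1}(t+\pi)$, corresponding to the two points where $\phi(s)$ equals $t$ modulo $\pi$.

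The key observation is that the spacing of these zeros is controlled by $f(t)$. Using the decomposition $\phi^{-1}(t)=C+t+g(t)+f(t)$ from (\ref{EqPhi}) together with the parity properties $g(t+\pi)=g(t)$ and $f(t+\pi)=-f(t)$ from (\ref{EqImparityOfF}), one immediately gets
\[
s_2-s_1 \;\equiv\; \pi - 2f(t) \pmod{2\pi}.
\]
Consequently the two zeros $s_1,s_2$ partition $\Omega$, viewed as a circle of circumference $2\pi$, into two arcs of lengths $\pi - 2f(t)$ and $\pi + 2f(t)$, the longer of which has length $\pi+2|f(t)|$.

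On each of these arcs, $h_t$ is smooth and vanishes at the two endpoints, so the classical one-dimensional Dirichlet inequality applies: on an interval of length $L$ with Dirichlet boundary conditions, one has $\int(\partial_s u)^2\,ds \ge (\pi/L)^2\int u^2\,ds$. Adding this inequality over the two arcs and bounding $(\pi/L)^2$ from below by $\pi^2/(\pi+2|f(t)|)^2$ on both pieces gives
\[
\int_\Omega (\partial_s h_t)^2\,ds \;\ge\; \left(\frac{\pi}{\pi+2|f(t)|}\right)^2 \int_\Omega h_t^2\,ds,
\]
which is precisely (\ref{eq:bound_on_I_from_f}).

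The substantive step is really just the bookkeeping that identifies the zeros of $h_t$ and ties their spacing to $f(t)$ through the Fourier decomposition of $\phi^{-1}$; the analytic inequality is classical (in fact saturated by $\sin(\pi(s-s_i)/L)$ on each arc), so I do not anticipate any serious obstacle.
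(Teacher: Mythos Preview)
Your argument is correct and follows essentially the same route as the paper: identify the two zeros $\phi^{-1}(t)$ and $\phi^{-1}(t+\pi)$ of $h_t$, compute via (\ref{EqPhi}) and (\ref{EqImparityOfF}) that they split $\Omega$ into arcs of lengths $\pi\pm 2f(t)$, and then apply the Dirichlet eigenvalue bound for the Laplacian on the longer arc. Your write-up is in fact a bit more explicit than the paper's (you spell out the factorization $h_t(s)=\psi(s)\sin(t-\phi(s))$ and the arithmetic for $s_2-s_1$), but the underlying idea is identical.
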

\begin{proof}
        By putting the definition (\ref{eq:def_of_x_and_y}) of $x$ and $y$ into the definition (\ref{eq:def_h}) of $h_t$ one obtains
        \[h_t(\phi^{-1}(t)) = 0 \quad \text{and}\quad h_t(\phi^{-1}(t+\pi)) = 0  .\]
        These two zeros cut the domain $\Omega$ into an interval of length $\pi+2f(t)$ and one of length $\pi-2f(t)$.
        The energy projection $I(t)$ can be interpreted as the Ritz-Rayleigh ratio for the Laplacian operator on $\Omega$ with Dirichlet conditions and $h_t(s)$ as one specific test function. Thus $I(t)$ must be larger or equal the lowest eigenvalue of the Laplacian, which yields (\ref{eq:bound_on_I_from_f}).
\end{proof}

Since a critical angle is a zero of $f$, Lemma \ref{lem:bound_on_I_from_f} states that $I(t) \ge 1$ if $t$ is a critical angle.
If the critical angles of $\Gamma$ are distributed somewhat evenly then this statement can be extended to an estimate on $\lambda_\Gamma$:

\begin{lemma} \label{lem:EvenlyDistributedCriticalAngles}
Let $\Gamma$ be as in Theorem \ref{theorem:main_results} and assume additionally that every interval $[\phi,\phi+\frac\pi 2)
\subset \Omega$ contains at least one critical angle of $\Gamma$. Then $\lambda_\Gamma \ge 1$.
\end{lemma}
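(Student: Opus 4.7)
The plan is to combine the $x/y$\nobreakdash-view decomposition with the critical-angle lower bound from Lemma~\ref{lem:bound_on_I_from_f}. Starting from $h_t(s) = \psi(s)\sin(t-\phi(s))$ and $h_{t+\pi/2}(s) = \psi(s)\cos(t-\phi(s))$, the identity $\sin^2 + \cos^2 = 1$ yields the pointwise relations
\[
    h_t(s)^2 + h_{t+\pi/2}(s)^2 = \psi(s)^2, \qquad h_t'(s)^2 + h_{t+\pi/2}'(s)^2 = \psi'(s)^2 + \kappa(s)^2\psi(s)^2.
\]
Integrating over $\Omega$ and using that $\psi$ is a ground state of $H_\Gamma$ gives, for every $t$,
\[
    \lambda_\Gamma \;=\; \frac{I(t)\int_\Omega h_t^2\,ds \;+\; I(t+\tfrac\pi2)\int_\Omega h_{t+\tfrac\pi2}^2\,ds}{\int_\Omega h_t^2\,ds \;+\; \int_\Omega h_{t+\tfrac\pi2}^2\,ds},
\]
which immediately implies $\lambda_\Gamma \ge \min\bigl(I(t),\,I(t+\pi/2)\bigr)$ for every $t$.

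By Lemma~\ref{lem:bound_on_I_from_f} we already have $I(t_0) \ge 1$ at every zero $t_0$ of $f$, i.e.\ at every critical angle of $\Gamma$. It would therefore suffice to produce a single angle $t^*$ such that both $t^*$ and $t^* + \pi/2$ are critical angles; the displayed identity would then force $\lambda_\Gamma \ge \min(1,1) = 1$. So the heart of the proof reduces to exhibiting such a ``$\pi/2$\nobreakdash-pair'' of critical angles from the density hypothesis.

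This last step is what I expect to be the main obstacle. Bare $\pi/2$\nobreakdash-density of the zero set of $f$ does not by itself rule out pathological configurations such as three zeros equally spaced at $0, \pi/3, 2\pi/3$ within $[0,\pi)$, which are $\pi/2$\nobreakdash-dense yet contain no pair of zeros at distance $\pi/2$. However, such patterns should be excluded by the monotonicity constraint $1 + g'(t) + f'(t) \ge 0$ that also underlies Lemma~\ref{lem:upper_bound_variation_f}. Because $g$ lives in the even-frequency Fourier subspace while $f$ lives in the odd-frequency one, testing this inequality against non-negative weights of the form $1 \pm \cos(nt)$ or $1 \pm \sin(nt)$ for odd $n \ge 3$ kills all contribution of $g'$ by orthogonality, leaving inequalities on the individual Fourier coefficients of $f$ (e.g.\ $2\pi - n\pi\,a_n \ge 0$ from the weight $1-\cos(nt)$). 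The plan is to combine these coefficient bounds with the density hypothesis to force the zero set of $f$ to contain a $\pi/2$-pair; managing this interplay cleanly is the delicate technical point I expect to have to work the hardest on.
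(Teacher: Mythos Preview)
Your decomposition and the inequality $\lambda_\Gamma \ge \min\bigl(I(t),\,I(t+\tfrac\pi2)\bigr)$ are correct and are exactly what the paper uses (this is Corollary~\ref{cor:two_angles_corollary}, derived from the Three Angles Lemma). The gap is in what you do next.

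You set yourself the task of finding a $t^*$ such that both $t^*$ and $t^*+\tfrac\pi2$ are \emph{critical angles}. This is too strong, and your own example shows it: a smooth convex curve with threefold symmetry has critical angles exactly at $0,\tfrac\pi3,\tfrac{2\pi}3,\pi,\tfrac{4\pi}3,\tfrac{5\pi}3$, which are $\tfrac\pi2$-dense yet contain no $\tfrac\pi2$-pair. Such curves \emph{do} satisfy the monotonicity constraint $1+g'+f'\ge 0$, so your proposed Fourier-coefficient argument cannot rule them out. That line of attack will not close.

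What you actually need is a $t^*$ with $I(t^*)\ge 1$ and $I(t^*+\tfrac\pi2)\ge 1$; there is no reason $t^*$ itself must be a zero of $f$. The missing ingredient is Lemma~\ref{lem:facts_about_I}: either $I$ is constant (and then $\lambda_\Gamma\ge 1$ immediately), or $I$ has exactly one maximum and one minimum in each half-period and is strictly monotone in between. Hence the set $\{t:I(t)\ge 1\}$ intersects $[0,\pi)$ in a single closed interval $[\iota_1,\iota_2]$. All critical angles in $[0,\pi)$ lie in this interval, so the density hypothesis forces $\iota_2-\iota_1\ge \tfrac\pi2$ (otherwise the complementary arc of length $>\tfrac\pi2$ would be free of critical angles). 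Now take $t^*=\iota_1$: both $t^*$ and $t^*+\tfrac\pi2\in[\iota_1,\iota_2]$ have $I\ge 1$, and your inequality finishes the proof.
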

The lemma was proved as Theorem 2.2 in \cite{Linde:ALowerBound}. It is also not difficult to derive it from the Three Angles Lemma (Lemma \ref{lem:three_angles}) that will be presented in Section \ref{subsec:two_ways} below. 

\begin{lemma} \label{lem:facts_about_I}
    Either 
    
    $I(t)$ is constant and then $\lambda_\Gamma\ge 1$, or 
    
    $I(t)$ has exactly one pair of maxima at a distance of $\pi$ from each other and exactly one pair of minima also at a distance of $\pi$ from each other and it is strictly monotonous in between those extrema.
\end{lemma}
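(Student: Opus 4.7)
The plan is to show that $I(t)=N(t)/D(t)$ with both $N$ and $D$ lying in $\mathrm{span}\{1,\cos 2t,\sin 2t\}$, so that the condition $I'(t)=0$ reduces to the vanishing of a specific trigonometric polynomial whose zero set is easy to analyse. Substituting $x=\psi\cos\phi$, $y=\psi\sin\phi$ into (\ref{eq:def_h}) gives $h_t(s)=\psi(s)\sin(t-\phi(s))$. Expanding $\sin^2(t-\phi)$, $\cos^2(t-\phi)$, and $\sin(t-\phi)\cos(t-\phi)$ by the double-angle formulas and integrating in $s$ shows that $D(t):=\int_\Omega h_t^2\,ds$ and $N(t):=\int_\Omega (h_t')^2\,ds$ are indeed of the form $a+b\cos 2t+c\sin 2t$, with coefficients that are moments of $\psi$, $\psi'$, $\phi'=\kappa$ against $1$, $\cos 2\phi$, $\sin 2\phi$. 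Since $\psi>0$ and $\sin(t-\phi)\not\equiv 0$, we have $D(t)>0$, and $I'(t)=0$ iff $N'(t)D(t)-N(t)D'(t)=0$.

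A direct expansion of $N'D-ND'$ a priori produces second-degree trigonometric terms $\sin 2t\cos 2t$, $\sin^2 2t$, $\cos^2 2t$ alongside the first-order terms. The crucial algebraic observation—the one step I expect to be the main (though still routine) obstacle—is that the $\sin 2t\cos 2t$ contributions cancel identically, while the coefficients of $\sin^2 2t$ and $\cos^2 2t$ are equal and therefore combine via $\sin^2+\cos^2=1$ into a constant. Hence
\[ N'D-ND' = \alpha + \beta\sin 2t + \gamma\cos 2t \]
for explicit constants $\alpha,\beta,\gamma$, and the entire structure of $I$ follows from the zero set of this first-order trigonometric polynomial.

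Case A: $\beta=\gamma=0$. Then $I'=\alpha/D^2$ has constant sign; $\pi$-periodicity forces $\alpha=0$, so $I\equiv I_0$. By Lemma \ref{lem:CriticalPoints} there exists a critical angle $t_0$ with $f(t_0)=0$, and Lemma \ref{lem:bound_on_I_from_f} gives $I_0=I(t_0)\ge 1$. To deduce $\lambda_\Gamma\ge 1$, take $\psi$ to be the ground state and integrate the identity $N(t)=I_0 D(t)$ over $t\in[0,\pi)$; Fubini together with $\int_0^\pi\sin^2(t-\phi)\,dt=\int_0^\pi\cos^2(t-\phi)\,dt=\pi/2$ and $\int_0^\pi\sin 2(t-\phi)\,dt=0$ turns this into $\tfrac{\pi}{2}\lambda_\Gamma\int\psi^2\,ds = I_0\cdot\tfrac{\pi}{2}\int\psi^2\,ds$, so $\lambda_\Gamma=I_0\ge 1$.

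Case B: $\beta^2+\gamma^2>0$. Write $\beta\sin 2t+\gamma\cos 2t=R\sin(2t+\theta_0)$ with $R>0$, so the critical points of $I$ in $[0,\pi)$ are the solutions of $\sin(2t+\theta_0)=-\alpha/R$. Since $I$ is continuous, non-constant and $\pi$-periodic it must attain both a strict maximum and a strict minimum, forcing at least two zeros of $I'$ per period; this rules out $|\alpha/R|\ge 1$, which would produce at most one, non-sign-changing, zero and thus a monotone $I$ contradicting periodicity. Hence $|\alpha/R|<1$, and $I'$ has exactly two simple, sign-changing zeros per period—one strict maximum and one strict minimum, with strict monotonicity between them. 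By $\pi$-periodicity these extend to exactly one pair of maxima and one pair of minima in $\Omega$, each pair at distance $\pi$, which is the claim.
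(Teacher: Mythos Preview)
Your proof is correct and rests on the same algebraic fact as the paper's, but you package it more cleanly. The paper substitutes $\tau=\tan t$, writes $I$ as a ratio of quadratics in $\tau$, and then argues that $I'(t)=\tau' P_3(\tau)/P_4(\tau)$ with $P_3$ of degree at most three; ruling out four extrema then requires a separate analysis of the behaviour as $\tau\to\infty$ (i.e.\ at $t=\pi/2$). You instead observe directly that $N,D\in\mathrm{span}\{1,\cos 2t,\sin 2t\}$ and verify the cancellation that forces $N'D-ND'$ itself to lie in that same three-dimensional space. This is the sharper statement---in the paper's language it says $P_3$ is actually at most quadratic---and it makes the zero-counting a one-line exercise on $\alpha+R\sin(2t+\theta_0)$ with no boundary case at $\pi/2$ to handle.

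For the constant case your averaging argument (integrate $N=I_0 D$ over a period and use $\int_0^\pi\sin^2(t-\phi)\,dt=\int_0^\pi\cos^2(t-\phi)\,dt=\pi/2$) is a legitimate alternative to the paper's direct evaluation $I(0)=I(\tfrac\pi2)$ combined with $E(x,y)=\frac{N(0)+N(\pi/2)}{D(0)+D(\pi/2)}$; both routes yield $\lambda_\Gamma=I_0$ once $\psi$ is the ground state. Either way you then invoke $I(t_0)\ge 1$ at a critical angle, exactly as the paper does.
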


\begin{proof}
    If $I(t) = I$ is constant then $I(t) \ge 1$ for all $t\in\Omega$ since we already know that this inequality holds if $t$ is any of the at least six critical angles of $\Gamma$. In that case we also have
    \[I = I(0) = \frac{\int_\Omega y'(s)^2\ds}{\int_\Omega y(s)^2\ds} = I(\frac{\pi}{2}) = \frac{\int_\Omega x'(s)^2\ds}{\int_\Omega x(s)^2\ds}\]
    and therefore by (\ref{EqLambda}) we get  $\lambda_\Gamma = I \ge 1$.
    
    So assume now that $I(t)$ is not constant. Since $I(t)$ is $\pi$-periodic, the following analysis can be restricted to the case $t \in (-\frac\pi2,\frac\pi2)$ where $\cos t>0$ plus the edge case $t = \frac\pi2$.
    Put (\ref{eq:def_h}) into (\ref{eq:def_I}), simplify the fraction with $\cos^2 t$ and write shorthand $\tau(t) = \tan t$ to obtain
    \begin{equation*}
         I(t) = \frac{(\int_\Omega x'^2\ds) \tau^2 -(\int_\Omega 2x'y' \ds) \tau +(\int_\Omega y'^2 \ds)}{(\int_\Omega x^2\ds) \tau^2 -(\int_\Omega 2xy \ds) \tau +(\int_\Omega y^2 \ds)}.
    \end{equation*}
    By the quotient rule the derivative of $I(t)$ can be written as
    \[I'(t) = \frac{\tau' P_3(\tau)}{P_4(\tau)}\]
    where $P_3(\tau)$ is a polynomial in $\tau$ of up to third degree and $P_4(\tau)$ is a polynomial in $\tau$ of exactly fourth degree.

    Being $\pi$-periodic but not constant, $I(t)$ must have a positive even number of extrema on $(-\frac\pi 2,\frac\pi 2]$, alternating between maxima and minima.
    The case of two extrema is claimed in the lemma, so it only remains to show that the number of extrema cannot be four or higher.

    For $I'(t)$ to have at least four zeros on $(-\frac\pi 2,\frac\pi 2]$, the polynomial $P_3$ must be of third degree to have three zeros on $(-\frac\pi 2,\frac\pi 2)$ and there must be a fourth zero at $t = \frac\pi 2$. Yet this fourth zero cannot exist: Being of third degree, $P_3(\tau)$ behaves like some (non-zero) constant times $\tau^3$ as $\tau \rightarrow \infty$ (\ite $t \rightarrow \frac\pi 2$). It is multiplied by $\tau'$ which increases like $\tau^2$ as $\tau \rightarrow \infty$. Their product then behaves like $\tau^5$. Their quotient with $P_4$, \ite the derivative $I'(t)$, can thus not converge to zero as $\tau\rightarrow\infty$. The continuity of $I'(t)$ then ensures that $I'(\frac\pi 2) \neq 0$.
\end{proof}

\subsection{Short-hand notation}

To streamline the algebra later on, it will be convenient to write the integrals above as scalar products in $\R^3$. To that end, define
\begin{equation}
    X := \int_\Omega
    \begin{pmatrix}
        x(s)^2\\
        -2x(s)y(s)\\
        y(s)^2
    \end{pmatrix}
    \ds,\, 
\end{equation}
\begin{equation}
    \hat X := \int_\Omega
    \begin{pmatrix}
        x'(s)^2\\
        -2x'(s)y'(s)\\
        y'(s)^2
    \end{pmatrix}
    \ds,
\end{equation}
\begin{equation} \label{eq:def_Vt_and_N}
    V_t= 
    \begin{pmatrix}
        \sin^2 t \\
        \sin t\cos t\\
        \cos^2 t
    \end{pmatrix}
    ,\text{ and }
    N = 
    \begin{pmatrix}
        1 \\
        0 \\
        1
    \end{pmatrix}.
\end{equation}
Then $E(x,y)$ and $I(t)$ assume the simple forms 
\begin{equation} \label{eq:NX_and_VX}
    E(x,y) = \frac{N \cdot \hat X}{N \cdot X} \quad \text{and} \quad I(t) = \frac{V_t \cdot \hat X}{V_t \cdot X}
\end{equation}

This completes all preliminaries required for the proof of Theorem \ref{theorem:main_results}.

\section{Proof of Theorem \ref{theorem:main_results}}  \label{sec:Proof}
\subsection{Overview and main ideas}
Before going into the details of the proof of Theorem \ref{theorem:main_results}, the following is a brief outline of the main ideas: 

The lowest eigenvalue $\lambda_\Gamma$ can be interpreted as some weighted average of energy projections $I(t)$ for different angles $t$ in the $x$/$y$ coordinates view. The function $I(t)$ has two maxima and two minima on $\Omega$ and it is monotonous in between. Since $I(t) \ge 1$ at critical angles, there are two intervals in $\Omega$ where $I(t)$ is larger than one and two intervals where it is smaller, unless $\lambda_\Gamma \ge 1$. Loosely speaking, if the interval where $I(t) \ge 1$ is too large (\ite, close to $\frac{\pi}{2}$), then the energy projections over this range of different angles already specify $\lambda_\Gamma$ so tightly that it can't be much smaller than $1$. On the other hand, if the interval is too small (\ite, close to zero) then the critical angles of $\Gamma$ - which are contained in that interval - are very close together. But the critical angles are the zeros of $f$ in the $\phi^{-1}$ view and if they are contained in a small interval, it restricts the ways how $f$ can be `balanced' such that its first Fourier components vanish. This will lead to a bound on $f$, which in turn leads to yet another bound on $\lambda_\Gamma$. The `worst case' is somewhere in between, where the interval with $I(t) \ge 1$ is neither too small nor too large, and it will serve as the basis for the global estimate on $\lambda_\Gamma$.

\subsection{Two ways of estimating $\lambda_\Gamma$} \label{subsec:two_ways}

In order to realize this road map for the proof, two important lemmas are required. The first one makes use of the $x$/$y$ coordinates view of the problem and it states how $\lambda_\Gamma$ can be expressed in terms of three different energy projections:
\begin{lemma}[Three Angles Lemma] \label{lem:three_angles}
    Let $\alpha, \beta, \gamma \in \Omega$ be angles such that no two of them are congruent modulo $\pi$. In other words, for any pair $\theta_1, \theta_2$ from $\{\alpha, \beta, \gamma\}$, one has
    \[\theta_1 \not\equiv \theta_2 \quad (\text{mod }\pi).\]
    Then 
    \begin{equation} \label{eq:three_angles_lemma}
        E(x,y) = \frac{aI(\alpha) X \cdot V_\alpha + b I(\beta)X \cdot V_\beta + cI(\gamma) X \cdot V_\gamma}{a X \cdot V_\alpha + bX \cdot V_\beta + cX \cdot V_\gamma}
    \end{equation}
    where
    \begin{eqnarray*}
        a &=& \frac{\cos(\beta - \gamma)}{\sin(\alpha - \beta) \sin(\alpha-\gamma)}\\
        b &=& \frac{\cos(\alpha - \gamma)}{\sin(\beta - \alpha) \sin(\beta - \gamma)}\\
        c &=& \frac{\cos(\alpha - \beta)}{\sin(\gamma - \alpha) \sin(\gamma - \beta)}
    \end{eqnarray*}
\end{lemma}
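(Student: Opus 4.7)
The plan is to reduce the entire lemma to the single vector identity $aV_\alpha + bV_\beta + cV_\gamma = N$. Once this is established, the claim follows in one line: by definition of $I$ we have $V_\theta \cdot \hat X = I(\theta)\, V_\theta \cdot X$ for every angle $\theta$, so the numerator on the right-hand side of (\ref{eq:three_angles_lemma}) becomes
\[
aV_\alpha \cdot \hat X + bV_\beta \cdot \hat X + cV_\gamma \cdot \hat X = \hat X \cdot (aV_\alpha + bV_\beta + cV_\gamma) = \hat X \cdot N,
\]
while the denominator becomes $X \cdot N$, and the ratio equals $E(x,y)$ by (\ref{eq:NX_and_VX}).

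Using $\sin^2 t + \cos^2 t = 1$, $\cos^2 t - \sin^2 t = \cos 2t$, and $2\sin t\cos t = \sin 2t$, the vector identity $aV_\alpha + bV_\beta + cV_\gamma = N$ is equivalent to the $3\times 3$ linear system
\begin{align*}
    a + b + c &= 2,\\
    a\cos 2\alpha + b\cos 2\beta + c\cos 2\gamma &= 0,\\
    a\sin 2\alpha + b\sin 2\beta + c\sin 2\gamma &= 0.
\end{align*}
The determinant of the coefficient matrix equals $\sin 2(\gamma - \beta) + \sin 2(\alpha - \gamma) + \sin 2(\beta - \alpha)$, which simplifies via two applications of the sum-to-product formula to $-4\sin(\alpha - \beta)\sin(\beta - \gamma)\sin(\alpha - \gamma)$. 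This is nonzero precisely because the hypothesis rules out any two of $\alpha, \beta, \gamma$ being congruent modulo $\pi$, so the system has a unique solution.

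It remains to verify that the expressions for $a, b, c$ stated in the lemma are that unique solution. The slickest route is Cramer's rule: for $a$, replacing the first column of the coefficient matrix by $(2, 0, 0)^T$ produces a determinant equal to $2\sin 2(\gamma - \beta) = -4\sin(\beta-\gamma)\cos(\beta - \gamma)$, and dividing by the determinant above reproduces the formula $a = \cos(\beta-\gamma) / [\sin(\alpha-\beta)\sin(\alpha-\gamma)]$ after cancellation. The computations for $b$ and $c$ are identical up to a cyclic permutation of $\alpha, \beta, \gamma$. The only obstacle in the whole argument is this mechanical verification with sum-to-product identities; there is no conceptual difficulty beyond recognizing the reduction to a $3\times 3$ linear problem and noting that $V_t$, $N$ and $\hat X$ interact only through inner products.
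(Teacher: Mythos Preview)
Your proof is correct and follows essentially the same approach as the paper: both reduce the claim to the vector identity $aV_\alpha + bV_\beta + cV_\gamma = N$ and then conclude via (\ref{eq:NX_and_VX}) together with $V_\theta\cdot\hat X = I(\theta)\,V_\theta\cdot X$. The only difference is in how that identity is established: the paper simply plugs the stated $a,b,c$ into the left-hand side and verifies the three components by direct algebra, whereas you rewrite the identity as a $3\times 3$ linear system in $(a,b,c)$, compute the determinant to confirm uniqueness, and recover the stated formulas via Cramer's rule. Your route is a bit more constructive (it explains where the expressions for $a,b,c$ come from, rather than just checking them), but the two arguments are otherwise the same.
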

The condition on $\alpha, \beta, \gamma$ being different modulo $\pi$ ensures that the denominators in the expressions for $a$, $b$ and $c$ are nonzero so that the three parameters are properly defined. 
\begin{proof}
    Putting the expressions for $a$, $b$ and $c$ from Lemma \ref{lem:three_angles} and the definition of $V_t$ from (\ref{eq:def_Vt_and_N}) into the l.h.s. of (\ref{eq:N_decomposed}) below shows after a tedious but straightforward exercise in algebra that
    \begin{equation}\label{eq:N_decomposed}
        a V_\alpha + bV_\beta + cV_\gamma =
        \begin{pmatrix}
            1\\
            0\\
            1
        \end{pmatrix}
        = N.
    \end{equation}
    Put this decomposition of $N$ into the expression for $E(x,y)$ in (\ref{eq:NX_and_VX}). Then replace the three $V_t\hat X$ terms by the respective $I(t)V_t X$, using the identity for $I(t)$ also from (\ref{eq:NX_and_VX}). The result is (\ref{eq:three_angles_lemma}).
\end{proof}

\begin{corollary} \label{cor:two_angles_corollary}
    For any $\alpha\in\Omega$,
    \begin{equation}
       E(x,y) \ge \min\left(I(\alpha), I(\alpha+\frac\pi 2)\right).
    \end{equation}
\end{corollary}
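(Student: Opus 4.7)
The plan is to deduce the corollary from the Three Angles Lemma by collapsing two of the three weighting coefficients to one, or equivalently, by observing that $N$ decomposes trivially as $V_\alpha + V_{\alpha+\pi/2}$.

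First I would verify the algebraic identity $V_\alpha + V_{\alpha + \pi/2} = N$ directly from the definitions in (\ref{eq:def_Vt_and_N}): the first and third components give $\sin^2\alpha + \cos^2\alpha = 1$, while the middle component vanishes because $\sin(\alpha+\pi/2)\cos(\alpha+\pi/2) = -\sin\alpha\cos\alpha$. This identity is precisely the specialization of the decomposition (\ref{eq:N_decomposed}) one obtains by setting $\beta = \alpha + \pi/2$ in Lemma \ref{lem:three_angles}; a short trigonometric check then gives $a = b = 1$ and $c = 0$, so the contribution of the third angle $\gamma$ drops out entirely.

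Substituting this decomposition into both numerator and denominator of $E(x,y) = N\cdot \hat X / N\cdot X$ from (\ref{eq:NX_and_VX}), and replacing each $V_t \cdot \hat X$ by $I(t)\,V_t \cdot X$ using the same formula, I obtain
\[E(x,y) = \frac{I(\alpha)\, V_\alpha \cdot X + I(\alpha + \tfrac\pi 2)\, V_{\alpha+\pi/2}\cdot X}{V_\alpha \cdot X + V_{\alpha + \pi/2}\cdot X}.\]
The two weights are nonnegative, since expanding with the definitions of $X$ and $V_t$ yields $V_t \cdot X = \int_\Omega h_t(s)^2\, ds \ge 0$. Their sum equals $N\cdot X = \int_\Omega (x^2+y^2)\, ds > 0$, so the right-hand side is a genuine convex combination of $I(\alpha)$ and $I(\alpha+\tfrac{\pi}{2})$, hence bounded below by the smaller of the two.

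There is essentially no obstacle to overcome — the only mild subtlety is the degenerate case where one of the two weights vanishes, but then the convex combination reduces to the other term and the inequality holds trivially. The corollary is really a one-line consequence once the orthogonal decomposition $N = V_\alpha + V_{\alpha+\pi/2}$ is in place.
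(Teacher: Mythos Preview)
Your proof is correct and follows essentially the same route as the paper: specialize the Three Angles Lemma to $\beta=\alpha+\frac\pi2$ so that $a=b=1$, $c=0$, and read off $E(x,y)$ as a convex combination of $I(\alpha)$ and $I(\alpha+\frac\pi2)$. Your explicit verification that the weights $V_t\cdot X=\int_\Omega h_t^2\,ds$ are nonnegative with strictly positive sum is a welcome bit of care that the paper glosses over.
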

\begin{corollary} \label{cor:two_angles_equal_I_corollary}
    If $I(\alpha) = I(\alpha+\frac\pi 2)$ for any $\alpha\in\Omega$, then $E(x,y) = I(\alpha)$.
\end{corollary}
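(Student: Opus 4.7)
The plan is to bypass the full three-angle machinery by exploiting a basic identity hidden in the definitions (\ref{eq:def_Vt_and_N}). The first step is to verify that for every $\alpha\in\Omega$,
\begin{equation*}
    V_\alpha + V_{\alpha+\pi/2} = N,
\end{equation*}
which is a direct trigonometric check using $\sin^2\alpha + \cos^2\alpha = 1$ and $\sin(\alpha+\frac\pi 2)\cos(\alpha+\frac\pi 2) = -\sin\alpha\cos\alpha$.

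Next, I would insert this decomposition into the compact formula $E(x,y) = (N\cdot \hat X)/(N\cdot X)$ from (\ref{eq:NX_and_VX}), splitting both numerator and denominator:
\begin{equation*}
    E(x,y) = \frac{V_\alpha \cdot \hat X + V_{\alpha+\pi/2}\cdot \hat X}{V_\alpha \cdot X + V_{\alpha+\pi/2}\cdot X}.
\end{equation*}
The other identity from (\ref{eq:NX_and_VX}), namely $V_t\cdot \hat X = I(t)(V_t\cdot X)$, lets me rewrite each term in the numerator. Finally, invoking the hypothesis $I(\alpha) = I(\alpha+\frac\pi 2)$ allows this common value to be factored out and to cancel against the denominator, which equals $N\cdot X>0$; what remains is exactly $E(x,y) = I(\alpha)$.

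I expect no real obstacle here: the corollary is essentially a two-angle degenerate case of the Three Angles Lemma, and the identity $N = V_\alpha + V_{\alpha+\pi/2}$ does all the work. Incidentally, the same decomposition also settles the preceding Corollary \ref{cor:two_angles_corollary} at once, because then $E(x,y)$ is a weighted average (with positive weights $V_\alpha \cdot X$ and $V_{\alpha+\pi/2}\cdot X$) of $I(\alpha)$ and $I(\alpha+\frac\pi 2)$, hence bounded below by their minimum.
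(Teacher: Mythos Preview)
Your proposal is correct and follows essentially the same route as the paper: both arguments rest on the decomposition $N = V_\alpha + V_{\alpha+\pi/2}$ together with $V_t\cdot\hat X = I(t)\,V_t\cdot X$, yielding $E(x,y)$ as a positive-weight average of $I(\alpha)$ and $I(\alpha+\frac\pi2)$. The only cosmetic difference is that the paper obtains this decomposition as the degenerate case $a=b=1$, $c=0$ of the Three Angles Lemma, whereas you verify $V_\alpha+V_{\alpha+\pi/2}=N$ directly and never invoke a third angle.
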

\begin{proof}
     Set $\beta = \alpha+\frac{\pi}{2}$ and let $\gamma \in \Omega$ be arbitrary but different from $\alpha$ and $\beta$ modulo $\pi$. Then the parameters in Lemma \ref{lem:three_angles} become
     \[a = b = 1 \text{ and } c = 0.\]
    Thus, according to Lemma \ref{lem:three_angles}, 
    \begin{equation*}
        E(x,y) = \frac{I(\alpha) X \cdot V_\alpha + I(\alpha+\frac{\pi}{2})X \cdot V_\beta}{X \cdot V_\alpha + X \cdot V_\beta }
    \end{equation*}
    which is a weighted average with positive coefficients of $I(\alpha)$ and $I(\alpha+\frac{\pi}{2})$ and proves both Corollary \ref{cor:two_angles_corollary} and Corollary \ref{cor:two_angles_equal_I_corollary}.
\end{proof}

Lemma \ref{lem:three_angles} opens up one way to find estimates for $\lambda_\Gamma$. The following lemma provides a second, complimentary approach. Both will be combined in the proof of Theorem \ref{theorem:main_results}.

\begin{lemma} \label{lem:estimate_lambda_interval}
    Let $\Gamma$ be as in Theorem \ref{theorem:main_results} and let $\{t_i\}_{i=1...n} \subset \Omega$ be a set of numbers such that $[t,t+\frac\pi 2] \cap \{t_i\} \neq \emptyset$ for all $t\in\Omega$. Assume that $|f(t_i)| \le \delta$ for all $i$ and for some $\delta > 0$.
    Then
    \begin{equation} \label{eq:EstimateLambdaVsAlpha}
        \lambda_\Gamma \ge \left(1+2\delta/\pi\right)^{-2}.
    \end{equation}
\end{lemma}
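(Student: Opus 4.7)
The plan is to combine Corollary \ref{cor:two_angles_corollary}, which bounds $\lambda_\Gamma$ from below by two values of the energy projection $I$ at angles separated by $\pi/2$, with the pointwise inequality
\[ I(t_i) \ge c := \left(1 + \frac{2\delta}{\pi}\right)^{-2} \]
that Lemma \ref{lem:bound_on_I_from_f} supplies at each sample point on account of $|f(t_i)| \le \delta$. If I can find a single angle $\alpha \in \Omega$ satisfying both $I(\alpha) \ge c$ and $I(\alpha+\pi/2) \ge c$, then applying Corollary \ref{cor:two_angles_corollary} to the ground state $\psi_0$ (for which $E(x,y) = \lambda_\Gamma$) will immediately yield $\lambda_\Gamma \ge c$, which is exactly (\ref{eq:EstimateLambdaVsAlpha}).

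To locate such an $\alpha$, I would first introduce the closed super-level set $S := \{t \in \Omega : I(t) \ge c\}$. The previous observation gives $\{t_i\} \subset S$, and the density hypothesis $[t, t+\pi/2] \cap \{t_i\} \ne \emptyset$ then forces every connected component of the open set $S^c$ to have length at most $\pi/2$. Next I would appeal to the dichotomy of Lemma \ref{lem:facts_about_I}: if $I$ is constant then $\lambda_\Gamma \ge 1 \ge c$ is immediate, and otherwise $I$ has exactly one pair of minima at distance $\pi$ on $\Omega$ and is strictly monotone between its extrema, so $S^c$ is either empty (and any $\alpha$ works) or consists of exactly two $\pi$-translated open intervals $(a,b)$ and $(a+\pi,b+\pi)$ of common length $L \le \pi/2$.

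The search for a good $\alpha$ then reduces to finding a point outside the bad set
\[ B := S^c \cup (S^c - \tfrac{\pi}{2}), \]
which by the structure just described is a union of four open intervals of length $L$ whose centres are equally spaced by $\pi/2$ around $\Omega$. Its total length is at most $4L \le 2\pi = |\Omega|$, so whenever $L < \pi/2$ the complement $\Omega \setminus B$ has positive measure and a good $\alpha$ exists trivially.

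The main delicate point is the borderline case $L = \pi/2$, where the four intervals tile $\Omega$ exactly. Here I would exploit the openness of $S^c$ to observe that the four boundary points $\{a,\, a+\pi/2,\, a+\pi,\, a+3\pi/2\}$ still lie outside $B$, and that continuity of $I$ combined with the definition of $S$ forces $I$ to equal the threshold $c$ at each of them. Hence taking $\alpha$ equal to any one of these four boundary points still gives $\min(I(\alpha), I(\alpha+\pi/2)) \ge c$, and the conclusion follows from Corollary \ref{cor:two_angles_corollary} as before. I expect this open-versus-closed bookkeeping at the edge case to be the only real subtlety; everything else is a direct chain of invocations of the lemmas already established.
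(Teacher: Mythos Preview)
Your argument is correct. The paper itself does not supply a proof of this lemma; it simply refers the reader to the earlier article \cite{Linde:ALowerBound} and gives only the intuitive remark that $|f(t_i)|\le\delta$ measures the deviation of $\kappa$ from $\pi$-periodicity.

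Your route is therefore not a reproduction of the paper's proof but an independent derivation built entirely from the machinery of the \emph{present} article: Lemma~\ref{lem:bound_on_I_from_f} to obtain $I(t_i)\ge c$, Lemma~\ref{lem:facts_about_I} to pin down the shape of $S^c=\{I<c\}$ as two $\pi$-translated arcs, and Corollary~\ref{cor:two_angles_corollary} (a consequence of the Three Angles Lemma, which is new in this paper) to conclude. Since the paper itself remarks that the special case Lemma~\ref{lem:EvenlyDistributedCriticalAngles} ``is not difficult to derive from the Three Angles Lemma'', your approach is exactly the kind of alternative proof the author alludes to but does not write out; the original proof in \cite{Linde:ALowerBound} predates the Three Angles Lemma and presumably argues more directly from the periodicity structure of $\phi^{-1}$. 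The pay-off of your approach is that it stays entirely within the present paper's toolkit and makes Lemma~\ref{lem:estimate_lambda_interval} self-contained here.

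Two minor remarks on presentation: you should state explicitly that $I(t)$ is taken with respect to the ground state $\psi$ (as in Section~\ref{sec:Proof}), since both Lemma~\ref{lem:facts_about_I} and the identification $E(x,y)=\lambda_\Gamma$ rely on this; and in the borderline case $L=\pi/2$ you only need $I(a)\ge c$ and $I(b)\ge c$, not equality, so the ``forces $I$ to equal the threshold'' clause is a slight overstatement (harmless, but unnecessary).
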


Lemma \ref{lem:estimate_lambda_interval} is proved in \cite{Linde:ALowerBound} and we refer the reader to that publication for the details of the proof. Intuitively speaking, the condition $|f(t_i)| \le \delta$ in the lemma is a measure for how far the curvature of $\Gamma$ is away from being $\pi$-periodic, since $f(t)$ is the only term on the right hand side of (\ref{EqPhi}) which does not give a $\pi$-periodic contribution to $\kappa$. For $\pi$-periodic curvatures the conjectured bound $\lambda_\Gamma \ge 1$ can easily be established and therefore controlling the deviation from periodicity allows to estimate $\lambda_\Gamma$ in the general case.

\subsection{Proof of Theorem \ref{theorem:main_results}}

Let $\Gamma$ be the curve from Theorem \ref{theorem:main_results} and $\psi>0$ the lowest eigenfunction of the Schrödinger operator $H_\Gamma$. Then $\Gamma$ determines the function $\phi(s)$, and the pair $\{\phi(s), \psi(s)\}$ defines the functions $x(s)$ and $y(s)$ as introduced in Section \ref{sec:xy_view}. The latter functions, in turn, determine the energy projection $I(t): \Omega \rightarrow \R_+$.

As mentioned in Section \ref{subsec:critical_points}, one has $I(t)\ge 1$ if $t$ is one of the at least six critical angles of $\Gamma$. On the other hand, one can assume that there is some $t\in\Omega$ where $I(t) < 1$. Otherwise, by Corollary \ref{cor:two_angles_corollary} one has $\lambda_\Gamma = E_\Gamma(\phi,\psi) \ge 1$ and the claim from Theorem \ref{theorem:main_results} is trivially correct.

Thus $I(t)$ attends a value greater than one at the two maxima which are guaranteed by Lemma \ref{lem:facts_about_I}, and a value below one at the two minima. 

It follows from the periodicity and monotonicity properties on $I(t)$ and a standard application of the intermediate value theorem that there is exactly one $t_\lambda \in [0, \frac{\pi}{2})$ such that
\[I(t_\lambda) = I(t_\lambda + \frac\pi2) = I(t_\lambda + \pi) = I(t_\lambda + \frac{3\pi}{2}).\]
Due to Corollary \ref{cor:two_angles_equal_I_corollary} this implies $\lambda_\Gamma = I(t_\lambda)$. Of course, we assume that $\lambda_\Gamma < 1$, for otherwise there is nothing to prove.

Without loss of generality, assume that $\Gamma$ is rotated such that $t_\lambda = 0$ and $I(t) > \lambda_\Gamma$ for $t \downarrow 0$. Note that for this rotation of $\Gamma$ there are two possible choices which differ by an angle of $\pi$, due to the $\pi$-periodicity of $I(t)$. We will decide for one of the two options later on.

It is a further consequence from the intermediate value theorem that there must be exactly four points $\iota_1, \dots, \iota_4$ where $I(\iota_n) = 1$ for $n = 1\dots 4$.

The indices of the $\iota_n$ can be chosen such that $0 < \iota_1 < \iota_2 < \frac{\pi}{2}$. The periodicity of $I(t)$ and a further proper choice of indices then imply $\iota_3 = \iota_1 + \pi$ and $\iota_4 = \iota_2 + \pi$.

We have seen before that $I(t)\ge 1$ at critical angles and therefore the intervals $[\iota_1,\iota_2]$ and $[\iota_3,\iota_4]$ contain all the critical angles of $\Gamma$, of which there are at least three pairs.

In the following, the smallest and the largest critical angle in $[\iota_1,\iota_2]$ will become relevant and they will be denoted by $\tau_1$ for the smallest and $\tau_2$ for the largest.
Note that a smallest and a largest always exist even in the case of infinitely many critical angles, since they are the zeros of the continuous function $f(t)$ and thus form a closed set.
 
For the reader's convenience, Figure \ref{fig:plot_I} visualizes the energy projection and the special points that have been defined and Table \ref{tab:assumptions} provides a summary of the choices that have been made w.l.o.g.
    
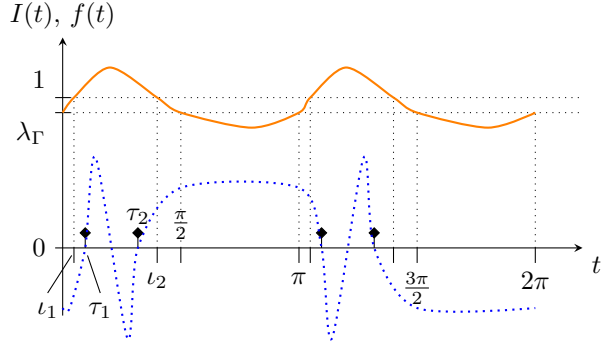
\begin{figure} 
    \centering
    \begin{tikzpicture}[>=stealth, xscale = 1, yscale = 2]

\pgfmathsetmacro{\iotaone}{0.15}
\pgfmathsetmacro{\iotatwo}{0.4*pi}
\pgfmathsetmacro{\tauone}{0.3}
\pgfmathsetmacro{\tautwo}{1.0}
\pgfmathsetmacro{\lamda}{0.9}


\draw[->] (  0,-0.45) -- (0, 1.4) node[above] {$I(t)$, $f(t)$};
\draw[->] ( -0. 1,0) node[left] {$0$} -- ({2.2*pi},0) node[below right] {$t$};
\draw[dotted] (0.1,  1) -- ({2.2*pi}  ,1);
\draw[dotted] (0.1,  0.9) -- ({2.2*pi}  ,\lamda);

\draw[dotted] (\iotaone, 0) -- (\iotaone, 1.0);
\draw[dotted] (\iotatwo, 0) -- (\iotatwo, 1.0);

\draw[dotted] ({\iotaone + pi}, 0) -- ({\iotaone + pi}, 1.0);
\draw[dotted] ({\iotatwo + pi}, 0) -- ({\iotatwo + pi}, 1.0);

\draw[dotted] ({0.5*pi}, 0) -- ({0.5*pi}, \lamda);
\draw[dotted] ({pi},     0) -- ({pi},     \lamda);
\draw[dotted] ({1.5*pi}, 0) -- ({1.5*pi}, \lamda);
\draw[dotted] ({2*pi},   0) -- ({2*pi},   \lamda);


\draw (0.1,  1.0) -- (-0.1,  1.0) node[above left] {$1$};
\draw (0.1,  0.9) -- (-0.1,  0.9) node[below left] {$\lambda_\Gamma$};


\draw (\tauone, -0.0) -- (\tauone,  0.1);
\draw (\tauone+0.03, -0.05) -- (\tauone+0.2, -0.3) node[below] {$\tau_1$};
\draw (\tauone,  0.1) node[draw, diamond, fill=black, inner sep=0pt, minimum size=4pt] {};
\draw (\tautwo,  -0.0) -- (\tautwo, 0.1) node[above] {$\tau_2$};
\draw (\tautwo,  0.1) node[draw, diamond, fill=black, inner sep=0pt, minimum size=4pt] {};
\draw ({\tauone+pi}, -0.0) -- ({\tauone+pi},  0.1) node[above] {};
\draw ({\tauone+pi},  0.1) node[draw, diamond, fill=black, inner sep=0pt, minimum size=4pt] {};
\draw ({\tautwo+pi},  0.1) -- ({\tautwo+pi}, -0.0) node[below] {};
\draw ({\tautwo+pi},  0.1) node[draw, diamond, fill=black, inner sep=0pt, minimum size=4pt] {};

\draw (\iotaone, 0.0) -- ( \iotaone, -0.1);
\draw ({\iotaone-0.03}, -0.05) -- (-0.2, -0.3) node[below] {$\iota_1$};
\draw (\iotatwo, 0.0) -- ( \iotatwo, -0.1) node[below] {$\iota_2$};
\draw ({\iotaone +  pi}, 0.0) -- ({\iotaone +  pi}, -0.1) node[above] {};
\draw ({\iotatwo +  pi}, 0.0) -- ({\iotatwo +  pi}, -0.1) node[above] {};

\draw ({pi/2},-0.1) -- ({pi/2},   0.0) node[above] {$\frac{\pi}{2}$};
\draw ({pi}  ,   0) -- ({pi}  ,  -0.1) node[below] {$\pi$};
\draw ({3*pi/2}, 0) -- ({3*pi/2},-0.1) node[below] {$\frac{3\pi}{2}$};
\draw ({2*pi},   0) -- ({2*pi}  ,-0.1) node[below] {$2\pi$};

\draw[\orange,thick]
  plot[smooth] coordinates {
    ( { 0.0 * pi},      \lamda)
    (    \iotaone,      1.0)
    ( { 0.2 * pi},      1.2)
    (    \iotatwo,      1.0)
    ( { 0.5 * pi},      \lamda)
    ( { 0.8 * pi},      0.8)
    ( { 1.0 * pi},      \lamda)
    ( {\iotaone + pi},  1.0)
    ( { 1.2 * pi},      1.2)
    ( {\iotatwo + pi},  1.0)
    ( { 1.5 * pi},      \lamda)
    ( { 1.8 * pi},      0.8)
    ( { 2.0 * pi},      \lamda)
  };

\draw[\blue, thick, dotted]
  plot[smooth] coordinates {
    ( { 0.0 * pi},     -0.4)
    ( { 0.03 * pi},     -0.4)
    (     \tauone,      0.0)
    ( {(0.8*\tauone+0.2*\tautwo)},  0.6)
    ( {(0.2*\tauone+0.8*\tautwo)}, -0.6)
    (    \tautwo,      0.0)
    ( {0.5 * pi},      0.4)
    ( {1.0 * pi},      0.4)
    ( {\tauone+pi},    0.0)
    ( {(pi+0.8*\tauone+0.2*\tautwo)}, -0.6)
    ( {(pi+0.2*\tauone+0.8*\tautwo)},  0.6)
    ( {pi+\tautwo},      0.0)
    ( {1.5 * pi},      -0.4)
    ( {2.0 * pi},      -0.4)
  };

\end{tikzpicture}
    \caption{By assumption, the $\pi$-periodic energy projection $I(t)$ \ifcolor (orange line) \else (solid line) \fi is equal to one at $\iota_1 > 0$ and $\iota_2 < \frac{\pi}{2}$ and it is larger than one in between. Four critical angles of $\Gamma$ are marked with diamonds in the chart. They occur in pairs which are separated by a distance of $\pi$, respectively. All critical angles below $\pi$ are contained in $[\iota_1,\iota_2]$ and $\tau_1$ and $\tau_2$ are the smallest and the largest of those, respectively. The critical angles are zeros of $f(t)$ \ifcolor (blue dotted line) \else (dotted line)\fi.}
    \label{fig:plot_I}
\end{figure}

\subsubsection*{Estimate 1 (based on the formula from \cite{Linde:ALowerBound})}




By the assumptions made so far, $I(t) > \lambda_\Gamma$ on $(0,\frac\pi2)$ and $I(t) \le \lambda_\Gamma$ on $[\frac\pi2, \pi]$. 
Corollary \ref{cor:two_angles_corollary} then implies that $E(x,y) \ge I(t)$ for any $t\in[\frac\pi2,\pi]$. But Lemma \ref{lem:bound_on_I_from_f} states that
\begin{equation} \label{eq:bound_lambda_estimate_0}
    I(t) \ge \left(1 + \frac{2|f(t)|}{\pi}\right)^{-2}  \quad\text{for all } t\in\Omega.
\end{equation}
Therefore 
\begin{equation} \label{eq:bound_lambda_estimate_1}
    \lambda_\Gamma = E(x,y) \ge I(t) \ge \left(1 + \frac{2|f(t)|}{\pi}\right)^{-2}  
\end{equation}
for any $t\in[\frac\pi2,\pi]$.

Since $f$ is continuous and we assume that $\lambda_\Gamma < 1$, this estimate implies that the function $f$ cannot come arbitrarily close to zero or even change signs on $[\frac\pi2, \pi]$. It is also anti-periodic with period length $\pi$, so it is either strictly positive on $[\frac\pi2, \pi]$ and strictly negative on $[\frac{3\pi}2, 2\pi]$, or vice versa.
The rotation of the curve that we have chosen w.l.o.g. so far was only determined up to an angle of $\pi$ since $I(t)$ is $\pi$-periodic. Thus we can use the remaining degree of freedom to assume w.l.o.g. that $\Gamma$ is rotated such that $f(t)>0$ on $[\frac\pi2,\pi]$.

Then $f$ satisfies the assumptions on $\hat f$ in Lemma \ref{lem:lower_bounb_variation_f} with any appropriate choice of $\delta > 0$.
By Corollary \ref{cor:dual_use} the value of $\delta$ is then bounded by
\begin{equation} \label{eq:bound_delta_nu_2}
    \delta < (\pi -2\nu)G(\Delta)
\end{equation}
where we choose to set
\begin{equation} \label{eq:def_of_Delta}
    \Delta := \iota_2-\iota_1    
\end{equation}
and
\[\nu = \max_{t\in[\frac\pi2,\pi]} f(t)-\delta.\]
Consequently, there must be some $t\in[\frac\pi2,\pi]$ where $f(t)$ is not larger than the r.h.s. of (\ref{eq:bound_delta_nu_2}), for otherwise $\delta$ could be chosen larger.

For this particular choice of $t$, inequality (\ref{eq:bound_lambda_estimate_1}) then becomes
\begin{align}
    \lambda_\Gamma &\ge \left(1 + 2\left(1 -\frac{2\nu}\pi\right)G(\Delta)\right)^{-2}\nonumber\\
    &= \frac{1}{\left(1 + 2\tilde\nu G(\Delta)\right)^{2}}\label{eq:lambda_estimate_1}\\
    &= B_1(\tilde\nu, \Delta) \nonumber
\end{align}
introducing the short-hand notation
\[\tilde\nu = \left(1 -\frac{2\nu}\pi\right).\]

\subsubsection*{Estimate 2 (based on the Three Angles Lemma)}

To obtain a second, complementary bound on $\lambda_\Gamma$, apply the Three Angles Lemma with 
\begin{align*}
    \alpha &= \iota_1,\\
    \beta &= \iota_2,\\
    \gamma &= (\iota_1 + \iota_2)/2 + \frac\pi2 \in (\frac\pi2,\pi)
\end{align*}
Since then $I(\alpha) = I(\beta) = 1$, Lemma \ref{lem:three_angles} yields
\begin{align} \label{eq:lambe_estimate_c_I}
    \lambda_\Gamma &= 1 + c (I(\gamma)-1) \frac{X\cdot V_\gamma}{aX\cdot V_\alpha + bX\cdot V_\beta + cX\cdot V_\gamma} \nonumber\\
    &=1 - c (1-I(\gamma)) \frac{X\cdot V_\gamma}{X \cdot N},
\end{align}
where the last step uses once again the equation (\ref{eq:N_decomposed}).
Note that the constant $c$ as defined in Lemma \ref{lem:three_angles} is positive. The term $(1-I(\gamma))$ is also positive because
\[I(\gamma) \le \lambda_\Gamma < 1 \text{ for } \gamma \in (\frac\pi2,\pi).\]
Finally, the fraction can be estimated by
\begin{align*}
    0 < \frac{X\cdot V_\gamma}{X \cdot N} &= \frac{\int_\Omega (x(s) \sin\gamma - y(s)\cos\gamma)^2 \ds}{\int_\Omega(x(s)^2 + y(s)^2) \ds}\\
     &= 1- \frac{\int_\Omega (x(s) \cos\gamma + y(s)\sin\gamma)^2 \ds}{\int_\Omega(x(s)^2 + y(s)^2) \ds}\\
     &< 1
\end{align*}
Consequently, equation (\ref{eq:lambe_estimate_c_I}) can be transformed to the bound
\begin{equation} \label{eq:estimate_lambda_c_I_gamma}
    \lambda_\Gamma > 1 - c(1-I(\gamma)).
\end{equation}
Since $\gamma \in (\frac\pi2,\pi)$ one has
\[f(\gamma) \le \delta + \nu\]
and therefore by Lemma \ref{lem:bound_on_I_from_f}
\[I(\gamma) \ge \left(1+\frac{2\delta + 2\nu}{\pi}\right)^{-2}.\]
Once again, the estimate on $\delta$ from Corollary \ref{cor:dual_use} can be applied to obtain
\begin{align*} 
    I(\gamma) &\ge \left(1+ \frac{2\nu}\pi+ \left(2 -\frac{4\nu}\pi\right)G(\Delta)\right)^{-2}\\
\end{align*}
where $\Delta$ is still defined as in (\ref{eq:def_of_Delta}).
The constant $c$ is defined by Lemma \ref{lem:three_angles}. Using the definitions of $\gamma$ and $\Delta$ and applying standard trigonometric identities, $c$ can be shown to be
\begin{align*}
    c &=  \frac{\cos(\iota_1-\iota_2)}{\sin(\gamma - \iota_1) \sin(\gamma - \iota_2)}\\
    &= 2 - \sec^2 \frac\Delta 2
\end{align*}
Putting this representation of $c$ and the estimate for $I(\gamma)$ into (\ref{eq:estimate_lambda_c_I_gamma}) yields
\begin{align}
    \lambda_\Gamma &> 1 -  \left(2 - \sec^2 \frac\Delta 2\right)\left( 1-\left(1+ \frac{2\nu}\pi+ \left(2 - \frac{4\nu}\pi\right) G(\Delta)\right)^{-2}\right)\nonumber\\
    &= 1 -  \left(2 - \sec^2 \frac\Delta 2\right)\left( 1-\frac{1}{\left(2+\tilde\nu (2G(\Delta)-1)\right)^{2}}\right) \label{eq:lambda_estimate_2}\\
     &= B_2(\tilde\nu, \Delta) \nonumber
\end{align}

\subsubsection*{Putting both estimates together}

In summary, we have found two complementary estimates for $\lambda_\Gamma$ which both only depend on the distance $\Delta$ of points where the energy projection is equal to one, and on the peak value of $f$ on $[\frac\pi2,\pi]$ represented (indirectly) by $\tilde\nu$.
By assumptions on $\iota_1$ and $\iota_2$, it is clear that $0<\Delta<\frac\pi2$.
Since $f$ is anti-periodic and its total variation is bounded by $2\pi$, its range is contained in $[-\frac\pi2,\frac\pi2]$. Thus, by definition of $\nu$, one knows that
\[0 \le \nu < \max_{t\in[\frac{\pi}{2},\pi]} f(t) \le \frac\pi2\]
and therefore $0 \le \tilde\nu \le 1$.
Combining (\ref{eq:lambda_estimate_1}) and (\ref{eq:lambda_estimate_2}) we arrive at the bound (\ref{eq:main_result}), thus proving Theorem \ref{theorem:main_results}.

\section{Proof of Theorem \ref{theorem:main_result2}}  \label{sec:Proof2}

\subsection{Overview and main ideas}

The strategy for proving Theorem \ref{theorem:main_result2} is as follows: We will characterize the level curve $\mathcal{L}$ defined by $B_1(\tilde\nu,\Delta) = 0.81$ in $[0,1]\times(0,\frac\pi 2)$. From the monotonicity properties of $B_1$ we can then conclude that $B_1(\tilde\nu,\Delta)$ is below $0.81$ only on one side of $\mathcal{L}$, \ite in the hatched area of Figure \ref{fig:plot_B1_B2}. So it remains to prove that $B_2$ is not smaller than $0.81$ in that area. Since $B_2$ increases monotonously with $\tilde\nu$, its values in the hatched area are bounded by the values it attains on $\mathcal{L}$ itself. So it only remains to prove that \[B_2(\tilde\nu,\Delta)|_{\mathcal{L}} \ge 0.81\]
Doing so is lengthy and technical, but not difficult.

\subsection{Characterizing $\mathcal{L}$}

To realize this outline of the proof, we start by summarizing certain monotonicity properties which can all be established via standard arguments:
For $\Delta\in(0,\frac\pi 2)$ and $\tilde\nu\in[0,1]$
\begin{itemize}
    \item $G(\Delta)$ increases strictly monotonously and since $G(0)=0$ this implies that $G(\Delta)$ is positive,
    \item $B_1(\tilde\nu, \Delta)$ decreases strictly monotonously with $\tilde\nu$ and with $\Delta$, and
    \item $B_2(\tilde\nu, \Delta)$ increases strictly monotonously with $\tilde\nu$.
\end{itemize}
Consider the level set
\[\mathcal{L} := \{(\tilde\nu,\Delta)|B_1(\tilde\nu,\Delta) = 0.81\}.\]
A point $(\tilde\nu,\Delta)$ being in $\mathcal{L}$ implies
\[(1+2\tilde\nu G(\Delta))^2 = \left(\frac{10}{9}\right)^2.\]
When taking the square root of both sides of the equation, the remaining terms are both positive since $\tilde\nu$ and $G(\Delta)$ are non-negative. Thus, solving for $\tilde\nu$ shows that all points in $\mathcal{L}$ satisfy the equation
\begin{equation} \label{eq:condition_cal_L}
    \tilde\nu =\frac{1}{18G(\Delta)}.
\end{equation}
It prescribes exactly one positive value $\tilde\nu$ for every $\Delta\in(0,\frac\pi 2)$, but not every such $\tilde\nu$ is smaller than one and thus in the allowed range $[0,1]$: 
Since $G(\Delta)$ increases strictly monotonously, $\tilde\nu$ decreases strictly monotonously with $\Delta$ in equation (\ref{eq:condition_cal_L}).
Consequently, $\tilde\nu\in[0,1]$ exactly if $\Delta_{\text{min}}\le\Delta<\frac\pi 2$ where $\Delta_{\text{min}}$ is defined by
\begin{equation}\label{eq:def_Delta_min}
    1 = \frac{1}{18G(\Delta_{\text{min}})}.
\end{equation}
Replacing $G(\Delta_{\text{min}})$ by its explicit term and solving for $\Delta_{\text{min}}$, this can be shown to be equivalent to
\[\tan(\Delta_{min}/4) = \frac{1}{3\sqrt 2 - 1}\]
or
\[\Delta_{\text{min}}\approx 1.196\]
In summary, the level set
\begin{equation}
    \mathcal{L} = \{(\tilde\nu,\Delta)|  \tilde\nu = (18G(\Delta))^{-1}, \Delta\in[\Delta_{\text{min}}, \frac\pi 2)\}
\end{equation}
can be interpreted as the graph of a strictly monotonously decreasing function $\tilde\nu(\Delta)$ on $[\Delta_{\text{min}},\frac\pi 2)$ with
\[\lim_{\Delta\rightarrow\Delta_{\text{min}}}\tilde\nu(\Delta) = 1.\]
The situation is depicted in Figure \ref{fig:plot_B1_B2}.
Due to the monotonicity properties of $B_1$ it is now clear that $B_1(\tilde\nu, \Delta) > 0.81$ if there is a point $(\tilde\nu_0 ,\Delta_0) \in \mathcal{L}$ such that $\tilde\nu < \tilde\nu_0$ and $\Delta < \Delta_0$.

To prove Theorem \ref{theorem:main_result2} it is now sufficient to show that $B_2(\tilde\nu, \Delta) > 0.81$ in all other cases, \ite if $(\tilde\nu ,\Delta)$ is located in the hatched area of Figure \ref{fig:plot_B1_B2}. Due to the continuity and monotonicity of $\mathcal{L}$ one can then find $\tilde\nu_0 < \tilde\nu$ such that $(\tilde\nu_0, \Delta)\in\mathcal{L}$. The fact that $B_2$ increases strictly monotonously in $\nu$ then implies
\[B_2(\tilde\nu, \Delta) > B_2(\tilde\nu_0,\Delta).\]
Consequently, proving Theorem \ref{theorem:main_result2} is now reduced to showing that
\[B_2(\tilde\nu,\Delta)|_{\mathcal{L}} > 0.81\]

\begin{figure}
    \centering
    \begin{tikzpicture}[>=stealth, scale = 2]

\draw[->] (0,-0.2) -- (0, 2.2) node[above] {$\tilde\nu$};
\draw[->] (-0.2,0) -- (3.2, 0) node[below right] {$\Delta$};
\draw[dotted] (0, 2) -- (3, 2);
\draw[dotted] (3, 0) -- (3, 2);
\draw[dotted] (2.3, 0) -- (2.3, 2);

\draw[->] (3.4, 1.4) -- (2.8, 1.8);

\draw (0,0) node[below left] {$0$};
\draw (2, 0.1) -- ( 2,-0.1) node[below] {$\frac\pi 3$};
\draw (2.3, 0.1) -- ( 2.3,-0.1) node[below] {$\Delta_{\text{min}}$};
\draw (3,0) -- (3,-0.1) node[below] {$\frac{\pi}{2}$};
\draw (0.1,  2) -- (-0.1,  2) node[left] {$1$};

\draw[pattern=north east lines, pattern color=black, draw=none, thick]
    (2.3, 2) 
    .. controls (2.5, 1.7) and (2.55, 1.65) .. (2.6, 1.6) 
    -- (3, 1.3)                                          
    -- (3, 2)                                            
    -- cycle;                                            
\draw[draw=\orange, ultra thick]
    (2.3, 2) 
    .. controls (2.5, 1.7) and (2.55, 1.65) .. (2.6, 1.6) 
    -- (3, 1.3);                                          

\draw (1,1) node[] {$B_1 > 0.81$};
\draw (3.6, 1.2) node[] {$B_2 > \inf B_2|_{\mathcal L}$};
\draw (2.65, 1.3) node[text=\orange] {$\mathcal{L}$};

\end{tikzpicture}
    \caption{To prove Theorem \ref{theorem:main_result2}, we have to show that at every point in $[0,1]\times(0,\frac\pi 2)$ at least one of the two functions $B_1$ and $B_2$ is greater than $0.81$. The monotonicity of $B_1$ guarantees this for $B_1$ below and to the left of the level set $\mathcal{L}$. The monotonicity of $B_2$ implies that no value of $B_2$ in the hatched area of the chart is larger than the infimum of $B_2$ restricted to $\mathcal{L}$.}
    \label{fig:plot_B1_B2}
\end{figure}
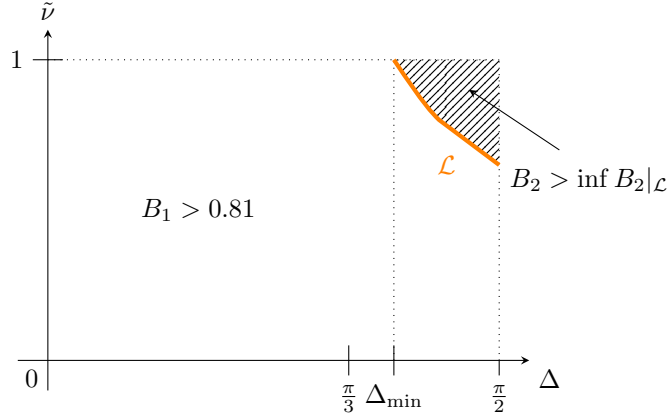

\subsection{Estimating the value of $B_2$ on $\mathcal L$}

Since all points on $\mathcal{L}$ satisfy (\ref{eq:condition_cal_L}), the restricted function $B_2(\tilde\nu,\Delta)|_{\mathcal{L}}$ can be identified with a function of $\Delta$ only:
\[B_{2,\mathcal{L}}(\Delta) := B_2(\frac{1}{18G(\Delta)},\Delta) \quad \text{ for }\Delta\in[\Delta_{\text{min}},\frac\pi 2).\]
Using the explicit expression (\ref{eq:def_B2}) for $B_2$ one arrives at
\begin{align}
    B_{2,\mathcal{L}}(\Delta)
    &= 1 -  \left(2 - \sec^2 \frac\Delta 2\right)\left( 1-\left(\frac{19}{9}- \frac 1{18} G(\Delta)^{-1}\right)^{-2}\right) \label{eq:B2_on_L}
\end{align}
To prove Theorem \ref{theorem:main_result2} we need to show that $B_{2,\mathcal{L}}(\Delta) > 0.81$ on its domain. To that end we will replace the terms (\ref{eq:B2_on_L}) step by step with linear estimations in order to simplify the expression until it becomes analytically tractable.

\subsubsection*{Linearizing the secant term}

Write $F(\Delta)$ for the first bracket in (\ref{eq:B2_on_L}), \ite
\[F(\Delta) := 2 - \sec^2 \frac\Delta 2.\]
Its derivative is
\begin{align*}
    F'(\Delta) &= -\sec^2(\frac\Delta 2) \tan(\frac\Delta2).
\end{align*}
The function $F(\Delta)$ is concave on $(0,\frac\pi 2)$ and we can therefore estimate it from above by any of its tangents.
In particular, the choice $\Delta = \frac\pi 3$ (which is inspired by numerical studies of the problem) as the point of tangency yields for all $\Delta \in (0,\frac\pi 2)$ 
\begin{align}
    F(\Delta) &\le F'(\frac\pi 3)(\Delta-\frac\pi 3)+F(\frac\pi 3)\nonumber\\
    &= -\frac{4\sqrt 3}{9}(\Delta-\frac \pi 3) + \frac 23 \label{eq:estimate_F_tangent}
\end{align}

\subsubsection*{Linearizing the cotangent term in $G(\Delta)$}
Due to the convexity of the tangent function on the interval $[0,\frac\pi 8]$ the estimate
\[\tan\frac\Delta 4 \le \frac\Delta 4 \left(\frac{\tan(\frac\pi 8)}{\frac\pi 8}\right) = \frac 2\pi \Delta (\sqrt 2 - 1)\]
holds for $\Delta\in(0,\frac\pi 2)$. Putting this bound into the definition (\ref{eq:def_G}) of $G(\Delta)$ yields
\begin{align} \label{eq:estimate_G_tangent}
    G(\Delta)^{-1} 
    &\ge \left(1 + \frac\pi 2(\sqrt 2+1)\frac 1 \Delta\right)^2
\end{align}

Before putting the estimates on $F$ and $G$ back into (\ref{eq:B2_on_L}) one needs to carefully check the signs of the different parts of the expression for $B_{2,\mathcal{L}}$. The first bracket in (\ref{eq:B2_on_L}), which we called $F(\Delta)$, is positive on $[\Delta_{\text{min}},\frac\pi 2)$. To see this, use standard trigonometric identities to rewrite $F(\Delta)$ as
\[F(\Delta) = \frac{\cos\Delta}{\cos^2(\frac \Delta2)}.\]

The term 
\[\frac{19}{9}-\frac{1}{18}G(\Delta)^{-1}\]
from the inner bracket of (\ref{eq:B2_on_L}) is also positive on $(\Delta_{\text{min}},\frac\pi 2)$. To see this, note that $G(\Delta_{\text{min}}) = \frac{1}{18}$ by definition (\ref{eq:def_Delta_min}) of $\Delta_{\text{min}}$, and keep in mind that $G(\Delta)$ increases with $\Delta$. 

In summary, when putting the estimates (\ref{eq:estimate_F_tangent}) and (\ref{eq:estimate_G_tangent}) into (\ref{eq:B2_on_L}), one obtains
\begin{align}
    B_{2,\mathcal{L}}(\Delta) &\ge 1 - \left(-\frac{4\sqrt 3}{9}(\Delta-\frac \pi 3) + \frac 23\right)\left( 1-\left(\frac{19}{9} - \frac{1}{18} \left(1+\frac k\Delta\right)^{2}\right)^{-2}\right)\nonumber\\
    &= 1 - \left(-\frac{4\sqrt 3}{9}(\Delta-\frac \pi 3) + \frac 23\right)\left( 1-81H(\Delta)^{-2}\right) \label{eq:B2_on_L2}
\end{align}
with
\[k := \frac\pi 2(\sqrt 2 +1)\]
and
\[H(\Delta) := \frac{37}2 - \frac k\Delta - \frac 12 \frac{k^2}{\Delta^2}.\]

\subsubsection*{Linearizing $H(\Delta)$}
Being the sum of concave functions on $[\Delta_{\text{min}},\frac\pi 2)$, $H(\Delta)$ is concave on that interval. Consequently, it can be estimated from above by any of its tangents. Choosing $\Delta = k/3$ as the point of tangency yields the bound
\[H(\Delta) \le D'(\frac k3)(\Delta - \frac k3) + D(\frac k3)\]
which can be simplified in a straightforward calculation to 
\begin{equation} \label{eq:estimate_D_tangent}
    H(\Delta) \le \frac{36}k\Delta - 1
\end{equation}
Putting this estimate back into (\ref{eq:B2_on_L2}) one obtains
\begin{align} \label{eq:B2_on_L3}
    B_{2,\mathcal{L}}(\Delta) &\ge 1 - \left(-\frac{4\sqrt 3}{9}(\Delta-\frac \pi 3) + \frac 23\right)\left( 1-\frac{81}{\left(\frac{36}k\Delta - 1\right)^{2}}\right)
\end{align}

\subsubsection*{Solving the third degree polynomial}

In the following, call $f(\Delta)$ the r.h.s. of (\ref{eq:B2_on_L3}).
Introduce the auxiliary variable
\[D = \frac{36}{k}\Delta - 1 \quad \leftrightarrow \quad \Delta = \frac{k}{36}(D+1)\]
and let $\tilde f(D)$ be the transformed function of $f(\Delta)$, \ite
\[\tilde f(D) = f(\Delta) \quad \text{ if } D = \frac{36}{k}\Delta - 1\]
Then $\tilde f(D)$ is defined for
\[D\in\left[\frac{36}{k}\Delta_{\text{min}}-1, \frac{36\pi}{2k}-1\right).\]
and its infimum on that interval is the same as the infimum of $f(\Delta)$ on $[\Delta_{\text{min}}, \frac\pi 2)$.
A short calculation shows that
\begin{align}
    \tilde f(D) 
     &= 1 + \left(\frac{\sqrt 3k}{81}D+\frac{\sqrt 3k}{81}-\frac{4\sqrt 3\pi}{27} - \frac 23\right)\left( 1-\frac{81}{D^2}\right).
\end{align}
Observe that $\tilde f$ is continuously differentiable on its domain. To locate its extrema, set $\tilde f'(D) = 0$. A short calculation shows that this condition is equivalent to
\begin{equation} \label{eq:third_deg_polynomial}
    0 = D^3 + p D + q 
\end{equation}
with the shorthand notation
\[p  = 81, \quad q = 162 - 162\frac{36}{k}(\frac\pi 3 + \frac{\sqrt 3}{2}).\]
The cubic discriminant 
\[-4p^3 - 27q^2\]
of the depressed cubic polynomial (\ref{eq:third_deg_polynomial}) is negative, so there is exactly one real root of the polynomial.
Cardano's formula tells us that the real root of the polynomial is
\[D_0 = \sqrt[3]{u_1}+\sqrt[3]{u_2}\]
where
\begin{align*}
    u_{1/2} &= -\frac q2 \pm\sqrt{\frac{q^2}4+\frac{p^3}{27}}.
\end{align*}
In summary, the value $D_0$ is the only critical point of $\tilde f(D)$. Since $f(D)$ (or, strictly speaking, its natural extension to the domain $\R^+$) approaches infinity both for $D \rightarrow 0$ and $D \rightarrow \infty$, the critical point $D_0$ must be a minimum.
The corresponding value of the variable $\Delta$ is
\[\Delta_0 = \frac{k}{36}(D_0 + 1) \approx 1.386\]
and since this is in the domain $[\Delta_{\text{min}},\frac\pi2)$ of $f$ one may conclude that $\Delta_0$ is a minimum of $f$.
The numeric value that $f$ attains at its minimum can be computed directly and so we arrive at
\[B_{2,\mathcal{L}}(\Delta) \ge f(\Delta_0) \approx 0.8166 > 0.81,\]
thus concluding the proof of Theorem \ref{theorem:main_result2}.

\section{Proof of Lemma \ref{lem:lower_bounb_variation_f}} \label{sec:proof_estimate_f}

\subsection{Overview and main ideas}

The basic idea of the proof is that every permissible function $\hat f$ can be modified in a sequence of steps such that the total variation $V(\cdot)$ does not increase, the Fourier condition from Lemma \ref{lem:lower_bounb_variation_f} remains intact, and one always arrives at the minimizer that yields equality in (\ref{eq:lower_bound_variation_f}). The steps to modify $\hat f$ are:
\begin{enumerate}
    \item Construct the function $f_1$ by `sorting and canceling' positive and negative parts of $\hat f$ on $(0, \frac\pi 2)$.
    \item Construct the step function $f_2$ based on $f_1$ which has one positive, one negative, and one zero plateau in $(\tau_1,\tau_2)$.
    \item Construct a further simplified step function $f_3$ without the zero plateau.
    \item Choose the function $f_4$ such that it minimizes $V(\cdot)$ out of all possible functions $f_3$ that could result from the previous steps.
\end{enumerate}
Note that there is no need for the functions $f_1$, $f_2$, $f_3$ and $f_4$ to satisfy the conditions of Lemma \ref{lem:lower_bounb_variation_f} as $\hat f$ does - in fact, they will not. But we do impose the Fourier condition on each of them since otherwise we could only prove a weaker bound on $V(\hat f)$

\subsection{Notation and Preliminaries}
Let $\hat f$ be as in Lemma \ref{lem:lower_bounb_variation_f} and define
\begin{align*}
    \Omega_+ :=& \left\{t\in\left(0,\frac\pi 2\right)| \hat f(t) \ge 0\right\},\\
    \Omega_- :=& \left\{t\in\left(0,\frac\pi 2\right)| \hat f(t) < 0\right\},\\
    \Omega_r :=& \left[\frac{\pi}{2}, \pi\right].
\end{align*}
Note that both $\Omega_+$ and $\Omega_-$ are of non-zero measure since otherwise the Fourier condition of Lemma \ref{lem:lower_bounb_variation_f} cannot hold.

Let $\hat f_+, \hat f_-: \Omega \rightarrow \R$ be the positive and negative part of $\hat f$ on $(0,\frac\pi2)$, respectively, \ite 
\begin{equation*}
    \hat f_\pm(t) := \hat f(t)\, \chi_{\Omega_\pm}(t)
\end{equation*}
where $\chi$ is the characteristic function on the respective set. Find a point $t_x \in (0,\frac{\pi}{2})$ such that
\begin{equation} \label{eq:t_x}
    -\int_0^{t_x} \hat f_-(t) \dt = \int_{t_x}^{\frac\pi2} \hat f_+(t) \dt
\end{equation}
The existence of at least one such $t_x$ is guaranteed by the intermediate value theorem since both sides of (\ref{eq:t_x}) are continuous functions in $t_x$ and the l.h.s. is zero for $t_x = 0$ and increases mono\-tonously in $t_x$ while the r.h.s. is zero for $t_x = \frac{\pi}{2}$ and decreases mono\-tonously.

In fact, $t_x$ can be more narrowly localized, namely $t_x \in (\tau_1,\tau_2)$.
To see that $\tau_1<t_x$ note that $\hat f$ is strictly negative on $[0, \tau_1)$. So if $t_x$ were smaller or equal $\tau_1$ one would have
\begin{equation*}
    -\int_0^{\tau_1} \hat f(t) \dt \ge -\int_0^{t_x} \hat f_-(t) \dt = \int_{t_x}^{\frac\pi2} \hat f_+(t) \dt = \int_{\Omega_+} \hat f(t) \dt.
\end{equation*} 
Given the monotonicity of the cosine function this would imply
\begin{equation*}
    -\int_0^{\tau_1} \hat f(t) \cos t \dt > \int_{\Omega_+} \hat f(t) \cos t \dt
\end{equation*}
and thus
\begin{align*}
    \int_\Omega \hat f(t) \cos t \dt &= \int_{\Omega_+\cup\Omega_-\cup\Omega_r} \hat f(t) \cos t \dt\\
    &< \int_{(\Omega_- \backslash[0,\tau_1])\cup\Omega_r} \hat f(t) \cos t \dt <  0,    
\end{align*}
which is a contradiction to the assumptions on $\hat f$.

In a similar way, one can show that $t_x < \tau_2$: The function $\hat f$ is strictly positive on $(\tau_2, \frac{\pi}{2}]$, so if $t_x$ were greater or equal $\tau_2$ one would have
\begin{equation*}
    \int_{\tau_2}^\frac{\pi}{2} \hat f(t) \dt \ge \int_{t_x}^\frac{\pi}{2} \hat f_+(t) \dt = -\int_0^{t_x} \hat f_-(t) \dt = -\int_{\Omega_-} \hat f(t) \dt
\end{equation*} 
so that
\[\int_\Omega \hat f(t) \sin t \dt >  \int_{(\Omega_+\backslash[\tau_2, \frac\pi 2])\cup\Omega_r} \hat f(t) \sin t \dt > 0,\]
which is again a contradiction to the assumptions on $\hat f$.

\subsection{Constructing $f_1$}

Now construct the function $f_1$ with two initially unknown parameters
\[\nu_1, \nu_2 \in [0,1]\]
as follows:
\begin{equation} \label{eq:def_f1_new}
    f_1(t) := 
    \begin{cases}    
        \nu_1 \hat f_+(t) & \text{for } t \in (0, t_x] \\[10pt]
        \nu_2 \hat f_-(t) & \text{for } t \in (t_x, \frac{\pi}{2}) \\[10pt]
        \hat f(t) & \text{for } t \in [\frac{\pi}{2}, \pi] \\[10pt]
        -f_1(t-\pi) & \text{for } t \in (\pi, 2\pi]
    \end{cases}
\end{equation}

\begin{figure}
    \centering
    \begin{tikzpicture}[>=stealth, scale = 2]

\def\miu{1.5}

\draw[->] (0,-1.5) -- (0, 1.7) node[above] {$\hat f$, $f_1$};
\draw[->] (-0.2,0) -- ({1.1*pi},0) node[below right] {$t$};
\draw[dotted] (0.2,  -1) -- (0.2  ,-1.3);
\draw[dotted] (0.7  ,-1) -- (0.7  ,-1.5);
\draw[dotted] (1.3  ,-1) -- (1.3  ,-1.3);
\draw[dotted] (1.55 ,-1) -- (1.55 ,-1.3);
\draw[dotted] (3.15 ,-1) -- (3.15 ,-1.5);

\draw[dotted] (0 , \miu) -- (3.15 ,\miu);
\draw[dotted] (0 , 1) -- (3.15 , 1);

\draw (0,0) node[below left] {$0$};
\draw (0.2, 0.1) -- ( 0.2,-0.1) node[below right] {$\tau_1$};
\draw (1.0,-0.1) -- (   1, 0.1) node[above] {$t_x$};
\draw (1.3, 0.1) -- ( 1.3,-0.1) node[below] {$\tau_2$};
\draw ({pi/2},0) -- ({pi/2},-0.1) node[below] {$\frac{\pi}{2}$};
\draw ({pi},  0) -- ({pi}  ,-0.1) node[below] {$\pi$};
\draw (0.1,  \miu) -- (-0.1,  \miu) node[left] {$\delta + \nu$};
\draw (0.1,  1) -- (-0.1,  1) node[left] {$\delta$};
\draw (0.1, -1) -- (-0.1, -1) node[left] {$-\delta$};
\draw (0.15, -1.3) node[below] {$\Omega_-$};
\draw (0.45, -1.3) node[below] {$\Omega_+$};
\draw (1.0,  -1.3) node[below] {$\Omega_-$};
\draw (1.45, -1.3) node[below] {$\Omega_+$};
\draw (2.3,  -1.3) node[below] {$\Omega_r$};

\draw[\orange,thick]
  plot[smooth] coordinates {
    (-0.1,     -1.2)
    ( 0.0,     -1.2)
    ( 0.05,    -1.1)
    ( 0.1,     -0.8)
    ( 0.15,    -0.3)
    ( 0.2,      0.3)
    ( 0.25,     0.9)
    ( 0.35,     1.2)
    ( 0.45,     1.3)
    ( 0.55,     1.2)
    ( 0.65,     0.7)
    ( 0.7,      0.0)
    ( 0.8,     -0.5)
    ( 0.9,     -0.8)
    ( 1.0,     -0.9)
    ( 1.1,     -0.8)
    ( 1.2,     -0.5)
    ( 1.3,      0.0)
    ( 1.4,      0.6)
    ( 1.5,      1.1)
    ( 1.6,      1.3)
    ( 1.7,      1.4)
    ( 1.9,      1.5)
    ( 2.2,      1.5)
    ( 2.5,      1.4)
    ( 2.7,      1.3)
    ( 3.0,      1.2)
    ( 3.15,     1.15)
    ( 3.2,      1.1)
    ( 3.25,      0.8)
  };

\draw[\blue,dotted, ultra thick]
  plot coordinates {
    (-0.1,     -1.2)
    ( 0.0,  -1.2)
    ( 0.0,   0.0)
    ( 0.17,     0.0 * 0.5)
    ( 0.2,      0.3 * 0.5)
    ( 0.25,     0.9 * 0.5)
    ( 0.35,     1.2 * 0.5)
    ( 0.45,     1.3 * 0.5)
    ( 0.55,     1.2 * 0.5)
    ( 0.65,     0.7 * 0.5)
    ( 0.7,      0.0)
    ( 1.0,      0.0)
    ( 1.0,     -0.9 * 0.9)
    ( 1.1,     -0.8 * 0.9)
    ( 1.2,     -0.5 * 0.9)
    ( 1.3,      0.0)
    ( 1.3,      0.0)
    ( 1.55,     0.0)
    ( 1.55,     1.2)
    ( 1.6,      1.3)
    ( 1.7,      1.4)
    ( 1.9,      1.5)
    ( 2.2,      1.5)
    ( 2.5,      1.4)
    ( 2.7,      1.3)
    ( 3.0,      1.2)
    ( 3.15,     1.15)
    ( 3.15,     0.0)
    ( 3.25,     0.0)
  };


\fill[pattern=north east lines,draw=none]
  ({pi/2},0) -- ({pi},0) -- ({pi},1) -- ({pi/2},1) -- cycle;

\end{tikzpicture}
    \caption{The function $f_1$\ifcolor (blue dotted line) \else (dotted line) \fi is constructed from the function $\hat f$ \ifcolor (orange line)\else (solid line)\fi. Here it is shown for $\nu_1 = 0.5$ and $\nu_2 = 0.9$.}
    \label{fig:plot_f_f1}
\end{figure}
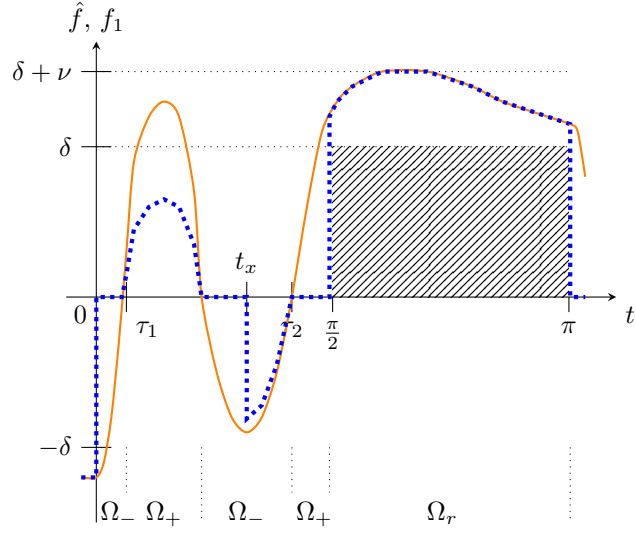

Compare the total variation of $\hat f$ and of $f_1$ on the following  partitioning of $\Omega$ to confirm that the total variation has not increased:
\begin{equation}
    \begin{split} \label{eq:partitioning}
        V(f_1) = &\, V_{[0,\tau_1]}(f_1) + V_{[\tau_1, \tau_2]}(f_1) +  V_{[\tau_2,\frac{\pi}{2}]}(f_1)\\
        &+ V_{[\frac{\pi}{2},\pi]}(f_1) + V_{[\pi,2\pi]}(f_1)
    \end{split}
\end{equation}
As for the first summand, 
\[V_{[0,\tau_1]}(f_1) = |\hat f(0)|\le V_{[0,\tau_l]}(\hat f).\]

The second summand can be estimated by partitioning the interval $[\tau_1, \tau_2]$ further into intervals between subsequent zeros of $\hat f$.
On each such interval it is apparent that the total variation of $f_1$ is bounded from above by the total variation of $\hat f$. In fact, as will become clear below, $\nu_1$ and $\nu_2$ must be strictly smaller than one and therefore the following inequality is strict: 
\[V_{[\tau_1, \tau_2]}(f_1) < V_{[\tau_1, \tau_2]}(\hat f).\]

The third summand in (\ref{eq:partitioning}) is
\[V_{[\tau_2,\frac{\pi}{2}]}(f_1) = |\hat f(\frac{\pi}{2})| \le V_{[\tau_2,\frac{\pi}{2}]}(\hat f).\]

The fourth summand is trivially identical to $V_{[\frac{\pi}{2},\pi]}(\hat f)$ and finally the last summand is equal to the sum of the first four due to the anti-periodicity of $f_1$ and $\hat f$. So in summary it is established that $V(f_1) < V(\hat f)$.

To conclude step 1 it remains to show that $\nu_1$ and $\nu_2$ can be chosen such that $f_1$, just like $\hat f$, satisfies the Fourier condition of Lemma \ref{lem:lower_bounb_variation_f}, \ite, that the integrals with $\sin t$ and $\cos t$ vanish.

The parameter space of $\nu_1$ and $\nu_2$ can be identified with the square $\Sigma =[0,1] \times [0,1]$ in $\R^2$ (cf. Figure \ref{fig:plot_square}). Let
\begin{equation}
    \begin{aligned}
        s_1(\nu_1, \nu_2) &:= \int_\Omega f_1(t) \sin t \dt \quad \text{ and} \\
        c_1(\nu_1, \nu_2) &:= \int_\Omega f_1(t) \cos t \dt.
    \end{aligned}
\end{equation}
The Fourier condition of Lemma \ref{lem:lower_bounb_variation_f} is met where $s_1$ and $c_1$ vanish. The functions $s_1$ and $c_1$ are linear and non-constant on $\Sigma$ and therefore their zeros form straight lines. By comparing the values of $s_1$ and $c_1$ at the corners of the square $\Sigma$ one can prove that these lines must intersect within the square and therefore a permissible choice of $\nu_1$ and $\nu_2$ exists which satisfies the Fourier condition.

Lower left corner: For $\nu_1 = \nu_2 = 0$ putting the definition of $f_1$ into the definitions of $s_1$ and $c_1$ directly shows that $s_1(0,0) > 0$ and $c_1(0,0) < 0$. 

Upper right corner: For $\nu_1 = \nu_2 = 1$ the functions $\hat f$ and $f_1$ only differ by $\hat f_-$ on $(0,t_x)$ and by $\hat f_+$ on $(t_x, \frac{\pi}{2})$ and by the corresponding terms on $(\pi, \frac 32 \pi)$ due to the anti-periodic nature of both functions. Thus
\begin{equation} \label{eq:balancing_cond_new}
    \begin{aligned}
        0 &= \int_\Omega\hat f(t) \sin t \dt  \\
        &= s_1(\nu_1,\nu_2) + 2\int_0^{t_x} \hat f_-(t) \sin t \dt + 2\int_{t_x}^{\frac\pi 2} \hat f_+(t) \sin t \dt, \\
        0 &= \int_\Omega\hat f(t) \cos t \dt \\
        &= c_1(\nu_1,\nu_2) + 2\int_0^{t_x} \hat f_-(t) \cos t \dt + 2\int_{t_x}^{\frac\pi 2} \hat f_+(t) \cos t \dt 
    \end{aligned}
\end{equation}
Based on (\ref{eq:t_x}) and the monotonicity properties of the sine and cosine functions in $(0, \frac{\pi}{2})$ one can conclude that
\begin{equation*}
    - \int_{0}^{t_x} \hat f_-(t) \sin t \dt < \int_{t_x}^{\frac\pi 2} \hat f_+(t) \sin t \dt
\end{equation*}
and
\begin{equation*}
    - \int_{0}^{t_x} \hat f_-(t) \cos t \dt > \int_{t_x}^{\frac\pi 2} \hat f_+(t) \cos t \dt.
\end{equation*}
Putting these into (\ref{eq:balancing_cond_new}) one can see that $s_1(1,1)<0$ and $c_1(1,1) > 0$.

Upper left corner: Since the value of $f_1$ on $(0,t_x)$ increases with $\nu_1$, one can estimate $s_1(0,1) < s_1(1,1) < 0$. Similarly, since increasing $\nu_2$ reduces the value of $f_1$ on $(t_x,\frac{\pi}{2})$, one directly gets $c_1(0,1) < c_1(0,0) < 0$.

Lower right corner: Analogous arguments show that $s_1(1,0) > s_1(0,0) > 0$ and $c_1(1,0) > c_1(1,1) > 0$.

The results from the four corners show that the zero line of $s_1$ must cross the square $\Sigma$ from left to right, while the zero line of $c_1$ crosses the square from top to bottom. Therefore an intersection between both lines must exist in $\Sigma$, corresponding to a choice of $\nu_1$ and $\nu_2$ where the Fourier condition in Lemma \ref{lem:lower_bounb_variation_f} is met.

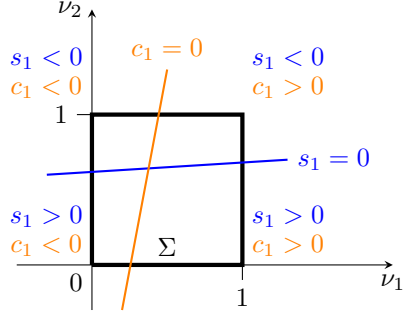
\begin{figure}
    \centering
    \begin{tikzpicture}[>=stealth, scale = 2]

\def\e{0.0}  

\draw[->] (0,   -0.3) -- (0,  1.7) node[left] {$\nu_2$};
\draw[->] (-0.5,   0) -- (2.0,  0) node[below] {$\nu_1$};

\draw (0,0) node[below left] {$0$};
\draw (1.0, 0.1) -- (   1,-0.1) node[below] {$1$};
\draw (0.1,   1) -- (-0.1,   1) node[left] {$1$};

\draw[black, ultra thick]
  plot coordinates {
    ( 0.0,      0.0)
    ( 0.0,      1.0)
    ( 1.0,      1.0)
    ( 1.0,      0.0)
    ( 0.0,      0.0)
  };
\draw (0.5, 0.0) node[above] {$\Sigma$};
  
\draw[\blue, thick] (-0.3,   0.6) -- ( 1.3,   0.7) node[right] {$s_1 = 0$};

\draw[\orange, thick] (0.2,   -0.3) --  (0.5,    1.3) node[above] {$c_1 = 0$};

\draw (0,0) node[above left] {\shortstack{\textcolor{\blue}{$s_1 > 0$} \\ \textcolor{\orange}{$c_1 < 0$}}};
\draw (1,0) node[above right] {\shortstack{\textcolor{\blue}{$s_1 > 0$} \\ \textcolor{\orange}{$c_1 > 0$}}};
\draw (0,1.05) node[above left] {\shortstack{\textcolor{\blue}{$s_1 < 0$} \\ \textcolor{\orange}{$c_1 < 0$}}};
\draw (1,1.05) node[above right] {\shortstack{\textcolor{\blue}{$s_1 < 0$} \\ \textcolor{\orange}{$c_1 > 0$}}};

\end{tikzpicture}
    \caption{Estimating the values of $s_1(\nu_1,\nu_2)$ and $c_1(\nu_1,\nu_2)$ at the four corners of the square $\Sigma$ shows that the zero lines of $s_1$ and $c_1$ cross inside the square. The intersection corresponds to a choice of $\nu_1$ and $\nu_2$ which `balances' $f_1$ with respect to the Fourier condition.}
    \label{fig:plot_square}
\end{figure} 

\subsection{Constructing $f_2$}

Based on $f_1$ as constructed above with $\nu_1$ and $\nu_2$ chosen such that the Fourier condition is fulfilled, define the step function 
\begin{equation} \label{eq:def_f2}
    \begin{aligned}
        f_2(t) &:= \nu_3 \left(\sup_{t'\in[\tau_1,t_x]} f_1(t')\right)\chi_{(\tau_1,l]}(t) \\
        &+ \nu_4\left(\inf_{t'\in[t_x, \tau_2]} f_1(t')\right) \chi_{[r,\tau_2)}(t)\\
        &+ (\delta + \nu) \chi_{\frac{\pi}{2}}(t) + \delta \chi_{(\frac{\pi}{2}, \pi]}(t)      
    \end{aligned}
\end{equation}
for $t \in (0,\pi]$ and $f_2(t) = - f_2(t-\pi)$ for $t \in (\pi, 2\pi]$. Here, $\chi$ denotes the characteristic function on the respective point or interval, 
$\nu_3, \nu_4 \in [0,1]$ are two initially unknown parameters, and
\begin{align*}
    l :=\;& \tau_1 + \frac{\int_{\tau_1}^{t_x} f_1(t) \dt}{\sup_{t\in[\tau_1,t_x]} f_1(t)}  < t_x,\\
    r :=\;& \tau_2 -\frac{\int_{t_x}^{\tau_2} f_1(t) \dt}{\inf_{t\in[t_x, \tau_2]} f_1(t)} > t_x
\end{align*}
It is straightforward to check that the total variation has not increased, \ite, 
\[V(f_2) \le V(f_1).\]

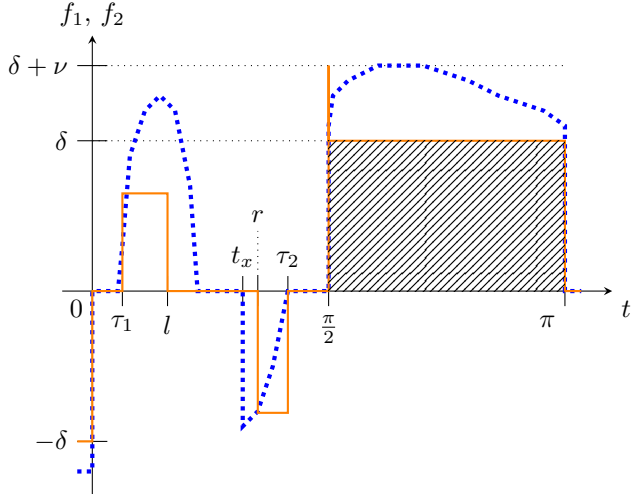
\begin{figure}
    \centering
    \begin{tikzpicture}[>=stealth, scale = 2]

\def\l{0.5}
\def\r{1.1}
\def\miu{1.5}

\draw[->] (0,-1.35) -- (0, 1.7) node[above] {$f_1$, $f_2$};
\draw[->] (-0.2,0) -- ({1.1*pi},0) node[below right] {$t$};

\draw[dotted] (0 , \miu) -- (3.15 ,\miu);
\draw[dotted] (0 , 1) -- (3.15 , 1);

\draw (0,0) node[below left] {$0$};
\draw (0.2, 0.1) -- ( 0.2,-0.1) node[below] {$\tau_1$};
\draw (\l,  0.1) -- (  \l,-0.1) node[below] {$l$};
\draw (1.0,-0.1) -- (   1, 0.1) node[above] {$t_x$};
\draw (\r, -0.1) -- (  \r, 0.1);
\draw [dotted] (\r,  0.1) -- (  \r, 0.4) node[above] {$r$};
\draw (1.3,-0.1) -- ( 1.3, 0.1) node[above] {$\tau_2$};
\draw ({pi/2},0) -- ({pi/2},-0.1) node[below] {$\frac{\pi}{2}$};
\draw ({pi},  0) -- ({pi}  ,-0.1) node[below left] {$\pi$};
\draw (0.1,  \miu) -- (-0.1,  \miu) node[left] {$\delta+\nu$};
\draw (0.1,  1) -- (-0.1,  1) node[left] {$\delta$};
\draw (0.1, -1) -- (-0.1, -1) node[left] {$-\delta$};

\draw[\blue,dotted, ultra thick]
  plot coordinates {
    (-0.1,     -1.2)
    ( 0.0,     -1.2)
    ( 0.0,      0.0)
    ( 0.17,     0.0)
    ( 0.2,      0.3)
    ( 0.25,     0.9)
    ( 0.35,     1.2)
    ( 0.45,     1.3)
    ( 0.55,     1.2)
    ( 0.65,     0.7)
    ( 0.7,      0.0)
    ( 1.0,      0.0)
    ( 1.0,     -0.9)
    ( 1.1,     -0.8)
    ( 1.2,     -0.5)
    ( 1.3,      0.0)
    ( {pi/2},   0.0)
    ( {pi/2},   1.1)
    ( 1.6,      1.3)
    ( 1.7,      1.4)
    ( 1.9,      1.5)
    ( 2.2,      1.5)
    ( 2.5,      1.4)
    ( 2.7,      1.3)
    ( 3.0,      1.2)
    ( {pi},     1.1)
    ( {pi},     0.0)
    ( 3.25,     0.0)
  };

\draw[\orange, thick]
  plot coordinates {
    (-0.1,     -1.0)
    ( 0.0,     -1.0)
    ( 0.0,      0.0)
    ( 0.2,      0.0)
    ( 0.2,      1.3 * 0.5)
    ( \l,       1.3 * 0.5)
    ( \l,       0.0)
    ( \r,       0.0)
    ( \r,      -0.9 * 0.9)
    ( 1.3,     -0.9 * 0.9)
    ( 1.3,      0.0)
    ( {pi/2},   0.0)
    ( {pi/2},   \miu)
    ( {pi/2},   1)
    ( {pi},     1.0)
    ( {pi},     0.0)
    ( 3.25,     0.0)
  };


\fill[pattern=north east lines,draw=none]
  ({pi/2},0) -- ({pi},0) -- ({pi},1) -- ({pi/2},1) -- cycle;

\end{tikzpicture}
    \caption{The step function $f_2$ \ifcolor (orange line) \else (solid line) \fi is constructed from $f_1$ \ifcolor (blue dotted line)\else (dotted line)\fi , here shown for $\nu_3 = 0.5$ and $ \nu_4 = 0.9$.}
    \label{fig:plot_f1_f2}
\end{figure}

To conclude step 2 it remains to show that $\nu_3$ and $\nu_4$ can be chosen such that $f_2$, just like $\hat f$ and $f_1$, satisfies the Fourier condition. The proof is analogous to the one for $f_1$:

The parameter space of $\nu_3$ and $\nu_4$ can be identified with the square $\Sigma=[0,1]$. Let
\begin{equation*}
    s_2(\nu_3, \nu_4) := \int_\Omega f_2(t) \sin t \dt 
\end{equation*}
and
\begin{equation*}
    c_2(\nu_3, \nu_4) := \int_\Omega f_2(t) \cos t \dt.
\end{equation*}
Again, one can argue that two straight lines of zeros of $s_2$ and $c_2$ must intersect in the square $\Sigma$ by analyzing the signs of $s_2$ and $c_2$ in the corners:

Lower left corner: For $\nu_3 = \nu_4 = 0$ putting the definition of $f_2$ into the definitions of $s_2$ and $c_2$ directly shows that $s_2(0,0) > 0$ and $c_2(0,0) < 0$. 

Upper right corner: For $\nu_1 = \nu_2 = 1$ one has
\begin{equation} \label{eq:s_2}
    \begin{split}
    s_2(1,1) =& s_1(1,1) + \int_\Omega \left(f_2(t)-f_1(t)\right) \sin t \dt \\
    =& s_1(1,1) + 2\left(\int_{\tau_1}^{t_x} f_2(t) \sin t \dt - \int_{\tau_1}^{t_x} f_1(t) \sin t \dt\right) \\
    &+ 2\left(\int_{t_x}^{\tau_2} f_2(t) \sin t \dt - \int_{t_x}^{\tau_2} f_1(t) \sin t \dt\right) \\
    &+ 2\left(\int_{\frac\pi 2}^\pi f_2(t) \sin t \dt - \int_{\frac\pi 2}^\pi f_1(t) \sin t \dt\right)
    \end{split}
\end{equation}
Here the first summand $s_1(1,1)$ is already known to be negative and one can use the monotonicity properties of sine and cosine and the fact that
\begin{eqnarray*}
    \int_{\tau_1}^{t_x} f_1(t) \dt &=& \int_{\tau_1}^{t_x} f_2(t) \dt,\\
    \int_{t_x}^{\tau_2} f_1(t) \dt &=& \int_{t_x}^{\tau_2} f_2(t) \dt
\end{eqnarray*}
to see that the three other summands in (\ref{eq:s_2}) are also non-positive. Thus $s_2(1,1) < 0$.
An analogous argument shows that $c_2(1,1) > 0$.

Upper left corner: Since the value of $f_2$ on $(0, t_x)$ increases with $\nu_3$, one can estimate
\[s_2(0,1) < s_2(1,1) < 0.\]
Since increasing $\nu_4$ reduces the value of $f_2$ on $(t_x,\tau)$, one also knows that
\[c_2(0,1) < c_2(0,0) < 0.\]

Lower right corner: Analogous arguments show that
\[s_2(1,0) > s_2(0,0) > 0\] 
and 
\[c_2(1,0) > c_2(1,1) > 0.\]

In summary, the zero line of $s_2$ must cross the square $\Sigma$ from left to right, while the zero line of $c_2$ crosses the square from top to bottom. Therefore an intersection between both lines must exist in $\Sigma$, corresponding to a choice of $\nu_3$ and $\nu_4$ where $f_2$ meets the Fourier condition of Lemma \ref{lem:lower_bounb_variation_f}.

\subsection{Constructing $f_3$}

Based on $f_2$ as constructed above with $\nu_3$ and $\nu_4$ chosen such that the Fourier condition holds, define the simplified step function 
\begin{equation} \label{eq:def_f3_new}
    \begin{aligned}
         f_3(t) :=& \;\nu_5 f_2(l)\chi_{(\tau_1,\nu_6]}(t) + \nu_5f_2(r) \chi_{[\nu_6,\tau_2)}(t)\\
        &+ (\delta + \nu) \chi_{\frac{\pi}{2}}(t) + \delta \chi_{(\frac{\pi}{2}, \pi]}(t)
    \end{aligned}
\end{equation}
for $t \in (0,\pi]$ and $f_3(t) = - f_3(t-\pi)$ for $t \in (\pi, 2\pi]$. Here 
$\nu_5 \in [0,1]$ and $\nu_6 \in [l,r]$ are two initially unknown parameters (cf. Figure \ref{fig:plot_f2_f3}).
It is obvious that the total variation has not increased, \ite, 
\[V(f_3) \le V(f_2).\]

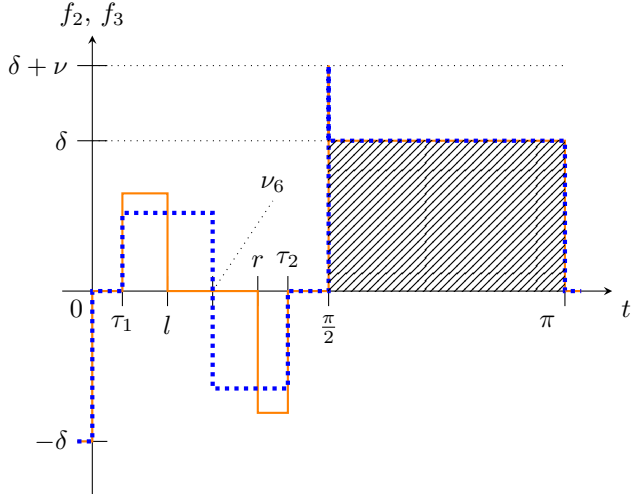
\begin{figure}
    \centering
    \begin{tikzpicture}[>=stealth, scale = 2]

\def\l{0.5}
\def\r{1.1}
\def\miu{1.5}
\def\nufive{0.8}
\def\nusix{0.8}

\draw[->] (0,-1.35) -- (0, 1.7) node[above] {$f_2$, $f_3$};
\draw[->] (-0.2,0) -- ({1.1*pi},0) node[below right] {$t$};

\draw[dotted] (0 , \miu) -- (3.15 ,\miu);
\draw[dotted] (0 , 1) -- (3.15 , 1);

\draw (0,0) node[below left] {$0$};
\draw (0.2, 0.1) -- ( 0.2,-0.1) node[below] {$\tau_1$};
\draw (\l,  0.1) -- (  \l,-0.1) node[below] {$l$};
\draw (\nusix,  0.1) -- (  \nusix,-0.1);
\draw [dotted] (\nusix,  0.0) -- (0.4 + \nusix, 0.6) node[above] {$\nu_6$};
\draw (\r, -0.1) -- (  \r, 0.1) node[above] {$r$};
\draw (1.3,-0.1) -- ( 1.3, 0.1) node[above] {$\tau_2$};
\draw ({pi/2},0) -- ({pi/2},-0.1) node[below] {$\frac{\pi}{2}$};
\draw ({pi},  0) -- ({pi}  ,-0.1) node[below left] {$\pi$};
\draw (0.1,  \miu) -- (-0.1,  \miu) node[left] {$\delta+\nu$};
\draw (0.1,  1) -- (-0.1,  1) node[left] {$\delta$};
\draw (0.1, -1) -- (-0.1, -1) node[left] {$-\delta$};

\draw[\orange, thick]
  plot coordinates {
    (-0.1,     -1.0)
    ( 0.0,     -1.0)
    ( 0.0,      0.0)
    ( 0.2,      0.0)
    ( 0.2,      1.3 * 0.5)
    ( \l,       1.3 * 0.5)
    ( \l,       0.0)
    ( \r,       0.0)
    ( \r,      -0.9 * 0.9)
    ( 1.3,     -0.9 * 0.9)
    ( 1.3,      0.0)
    ( {pi/2},   0.0)
    ( {pi/2},   \miu)
    ( {pi/2},   1)
    ( {pi},     1.0)
    ( {pi},     0.0)
    ( 3.25,     0.0)
  };

\draw[\blue,dotted, ultra thick]
  plot coordinates {
    (-0.1,     -1.0)
    ( 0.0,     -1.0)
    ( 0.0,      0.0)
    ( 0.2,      0.0)
    ( 0.2,      1.3 * 0.5 * \nufive)
    ( \nusix,   1.3 * 0.5 * \nufive)
    ( \nusix,   -0.9 * 0.9 * \nufive)
    ( 1.3,     -0.9 * 0.9 * \nufive)
    ( 1.3,      0.0)
    ( {pi/2},   0.0)
    ( {pi/2},   \miu)
    ( {pi/2},   1)
    ( {pi},     1.0)
    ( {pi},     0.0)
    ( 3.25,     0.0)
  };


\fill[pattern=north east lines,draw=none]
  ({pi/2},0) -- ({pi},0) -- ({pi},1) -- ({pi/2},1) -- cycle;

\end{tikzpicture}
    \caption{The step function $f_3$ \ifcolor (blue dotted line)\else (dotted line)\fi  is constructed from $f_2$ \ifcolor (orange line)\else (solid line)\fi, here shown for $\nu_5 = 0.8$ and $ \nu_6 = \frac{r+l}2$.}
    \label{fig:plot_f2_f3}
\end{figure}

To conclude step 3 it remains to show that $\nu_5$ and $\nu_6$ can be chosen such that $f_3$, just like $\hat f, f_1$ and $f_2$, satisfies the Fourier condition of Lemma \ref{lem:lower_bounb_variation_f}, \ite, that the integrals with $\sin t$ and $\cos t$ vanish. Once more, the proof is similar to the corresponding one for $f_1$ and $f_2$:

\begin{figure}
    \centering
    \begin{tikzpicture}[>=stealth, scale = 2]

\def\e{0.0}  

\draw[->] (0,   -0.3) -- (0,  1.7) node[left] {$\nu_5$};
\draw[->] (-0.5,   0) -- (2.0,  0) node[below] {$\nu_6$};

\draw (0,0) node[below left] {$0$};
\draw (0.1,   1) -- (-0.1,   1) node[left] {$1$};
\draw (1.3, 0.1) -- ( 1.3,-0.1) node[below] {$r$};
\draw (0.3, 0.1) -- ( 0.3,-0.1) node[below] {$l$};
\draw (0.8, 0.1) -- ( 0.8,-0.1) node[below] {$t_a$};

\draw[black, ultra thick]
  plot coordinates {
    ( 1.3,      0.0)
    ( 0.3,      0.0)
  };

\draw [dotted] (0.3, 0.0) -- (0.3, 1.0);
\draw [dotted] (0.8, 0.0) -- (0.8, 1.0);
\draw [dotted] (1.3, 0.0) -- (1.3, 1.0);
\draw [dotted] (0.0, 1.0) -- (1.3, 1.0);

\draw[black, ultra thick]
  plot coordinates {( 0.3,      0.95) ( 0.3,      1.05)};
\draw[black, ultra thick]
  plot coordinates {( 0.25,      1.0) ( 0.35,      1.0)};

\draw[black, ultra thick]
  plot coordinates {( 0.8,      0.95) ( 0.8,      1.05)};
\draw[black, ultra thick]
  plot coordinates {( 0.75,      1.0) ( 0.85,      1.0)};

\draw[black, ultra thick]
  plot coordinates {( 1.3,      0.95) ( 1.3,      1.05)};
\draw[black, ultra thick]
  plot coordinates {( 1.25,      1.0) ( 1.35,      1.0)};

\draw (0.3, 0.0) node[above right] {$\Sigma_3$};

\draw[\orange, thick] plot[smooth] coordinates{(0.4, 1.0) (0.6, 0.3) ( 1.3,   0.2)} node[right] {$c_3 = 0$};
  
\draw[\blue, thick] plot[smooth] coordinates{(1.2, 1.0) (0.8, 0.5) (0.3, 0.3)};

\draw [dotted] (-0.1, 0.3) -- (0.3, 0.3);
\draw[\blue] (-0.1, 0.3) node[left] {$s_3 = 0$};

\draw [dotted] (1.0, 0.0) -- (1.1, -0.3);
\draw (1.1, -0.3) node[below right] {\shortstack{\textcolor{\blue}{$s_3 > 0$} \\ \textcolor{\orange}{$c_3 < 0$}}};

\draw [dotted] (0.3, 1.0) -- (-0.2, 1.2);
\draw (-0.2, 1.1) node[above left] {\shortstack{\textcolor{\blue}{$s_3 < 0$} \\ \textcolor{\orange}{$c_3 < 0$}}};

\draw [dotted] (0.8, 1.0) -- (0.9, 1.3);
\draw (0.9, 1.3) node[above] {\shortstack{\textcolor{\blue}{$s_3 < 0$} \\ \textcolor{\orange}{$c_3 > 0$}}};

\draw [dotted] (1.3, 1.0) -- (1.5, 1.1);
\draw (1.5, 1.1) node[above right] {\shortstack{\textcolor{\blue}{$s_3 > 0$} \\ \textcolor{\orange}{$c_3 > 0$}}};

\end{tikzpicture}
    \caption{Estimating the values of $s_3(\nu_5,\nu_6)$ and $c_3(\nu_5,\nu_6)$ at three points and at the $\nu_5 = 0$ line shows that the zero sets of $s_3$ and $c_3$ cross inside the square. The intersection corresponds to a choice of $\nu_5$ and $\nu_6$ which `balances' $f_3$ with respect to the Fourier condition.}
    \label{fig:plot_square_f3}
\end{figure}
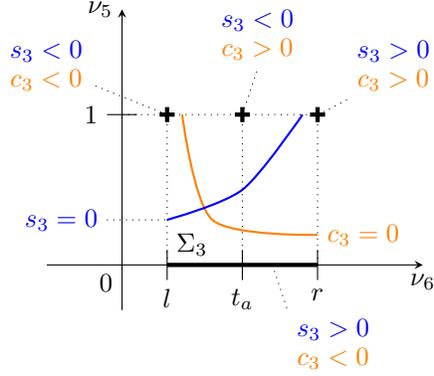 

The parameter space of $\nu_5$ and $\nu_6$ can be identified with the rectangle $\Sigma_3=[0,1] \times [l,r]$ in $\R^2$. Let
\begin{equation*}
    s_3(\nu_5, \nu_6) := \int_\Omega f_3(t) \sin t \dt 
\end{equation*}
and
\begin{equation*}
    c_3(\nu_5, \nu_6) := \int_\Omega f_3(t) \cos t \dt.
\end{equation*}

Once again, analyze the signs of $s_3$ and $c_3$ at certain special points (cf. Figure \ref{fig:plot_square_f3}):

Bottom line: For $\nu_5 = 0$ the only contribution to $s_3$ and $c_3$ comes from the terms $\delta \chi_{(\frac{\pi}{2}, \pi]}(t)$ and, by anti-periodicity, $-\delta \chi_{(\frac{3\pi}{2}, 2\pi]}(t)$ in (\ref{eq:def_f3_new}). Thus $s_3(0,\nu_6) > 0$ and $c_3(0,\nu_6) < 0$.

Upper corners: For $\nu_5 = 1$, the functions $f_2$ and $f_3$ are identical except for the interval $(l, r)$ and its anti-periodic counterpart $(l+\pi, r+\pi)$. For $\nu_6 = r$, this difference adds a positive quantity to the integrals over sine and cosine, and for $\nu_6 = l$ a negative one. Therefore
\[s_3(1,l) <  0 < s_3(1,r)\]
and
\[c_3(1,l) <  0 < c_3(1,r)\]

Third point: Finally, set $\nu_5 = 1$ and $\nu_6 = t_a$ where $t_a$ is chosen such that 
\[\int_{[l, t_a]} f_3(t) \dt = - \int_{[t_a, r]} f_3(t) \dt.\]
Then, compared to $f_2$, function $f_3$ gets the same additional area above the $t$ axis as below the axis on the interval $[l,r]$.
Therefore the monotonicity properties of sine and cosine imply that
\[s_3(1,t_a) <  0 < c_3(1,t_a).\]

Unlike $s_2$ and $c_2$, the functions $s_3$ and $c_3$ are not linear (in $\nu_6$) and therefore their zero sets are in general not straight lines. Nevertheless, since $s_3$ and $c_3$ are continuous and strictly monotonous in $\nu_5$ and $\nu_6$, respectively, their zero sets are strictly monotonous $C^1$ curves by the Implicit Function Theorem. Figure \ref{fig:plot_square_f3} shows how the zero set of $s_3$ connects the left side of the rectangle $\Sigma_3$ to the $[t_a, r]$ interval on the top side. The zero set of $c_3$ connects the right side of the rectangle to the $[l, t_a]$ interval. An intersection between both lines must exist in $\Sigma$, corresponding to a choice of $\nu_5$ and $\nu_6$ where $f_3$ meets the Fourier condition of Lemma \ref{lem:lower_bounb_variation_f}.

\subsection{Finding $f_4$}

Lastly, it will be shown that $f_3$ as constructed in the previous step belongs to a class of functions, one of which minimizes the total variation to the bound claimed in the lemma.

Consider the class $F$ of functions $\phi: \Omega\rightarrow\R$ that can be written as
\begin{equation} \label{eq:def_f3}
    \phi(t) = \beta \chi_{(\tau_1,m]}(t) - \gamma\chi_{[m,\tau_2)}(t) +(\delta + \nu) \chi_{\frac{\pi}{2}}(t) + \delta \chi_{[\frac{\pi}{2}, \pi]}(t)
\end{equation}
for $t \in (0,\pi]$ and $\phi(t) = - \phi(t-\pi)$ for $t \in (\pi, 2\pi]$, with the parameters $\tau_1 < m < \tau_2$ and $\beta>0$ and $\gamma>0$ such that $\phi(t)$ meets the Fourier condition.

Then $f_3 \in F$ and the remaining task is to identify the particular function $f_4 \in F$ which minimizes the total variation.

Adding up the step changes along $\Omega$, one can establish that 
\begin{equation} \label{eq:total_var_f4}
    V(f_4) = 4(\delta + \nu + \beta + \gamma).
\end{equation}
With $\delta$ and $\nu$ being constant, minimizing $V(f_4)$ amounts to finding the minimal possible sum
\begin{equation*}
    S(m) := \beta + \gamma \quad \text{for } \tau_1 < m < \tau_2,
\end{equation*}
where $\beta$ and $\gamma$ depend on $m$ via the Fourier conditions
\begin{align*}
    0 &= \frac 12 \int_\Omega f_4(t) \sin t \dt\\
    &=  \beta (\cos\tau_1- \cos m) + \gamma(\cos \tau_2 -\cos m) + \delta
\end{align*}
and
\begin{align*}
    0 &= \frac 12 \int_\Omega f_4(t) \cos t \dt \\
    &= \beta (\sin m-\sin\tau_1) -\gamma(\sin\tau_2 - \sin m) - \delta.
\end{align*}
These linear equations define $\beta$ and $\gamma$ for every choice of $m$. Solving the first one for $\beta$ and the second one for $\gamma$, one obtains
\begin{eqnarray}
    \beta &=& \frac{\gamma(\cos m-\cos\tau_2) - \delta}{\cos\tau_1- \cos m}, \label{eq:beta_new}\\
    \gamma &=& \frac{\beta (\sin m-\sin\tau_1)- \delta}{\sin\tau_2 - \sin m}.\label{eq:gamma_new}
\end{eqnarray}
Putting $\gamma$ from (\ref{eq:gamma_new}) into (\ref{eq:beta_new}) and vice versa, one can express $\beta$ and $\gamma$ directly in terms of $l$ and $r$. Adding them yields an explicit expression for $S(m)$:
\begin{equation} \label{eq:def_of_S}
     S(m) = \delta\frac{\sin\tau_2-\sin\tau_1+\cos\tau_1-\cos\tau_2}{ \sin(\tau_2-m) + \sin(m-\tau_1) - \sin(\tau_2-\tau_1)}
\end{equation}
A standard search for extreme values then shows that $S(m)$ is minimal when
\[m = \frac{\tau_1+\tau_2}{2}.\]
Putting this into (\ref{eq:def_of_S}) and applying standard trigonometric identities, the lower bound on $S(m)$ can be simplified to
\begin{equation*} 
    \min_{\tau_1 < m < \tau_2} S(m)  = \sqrt 2 \delta\,\frac{\sin\left(\frac{\tau_1+\tau_2}{2}+\frac\pi4\right)}{2\sin^2\frac{\tau_2-\tau_1}{4}}.
\end{equation*}
According to (\ref{eq:total_var_f4}) the minimal total variation of $f_4$ is then
\begin{equation} \label{eq:V4_min}
    V(f_4) = 4(\delta + \nu) + 2\sqrt 2 \delta\;\frac{\sin\left(\frac{\tau_1+\tau_2}{2}+\frac\pi4\right)}{\sin^2\frac{\tau_2-\tau_1}{4}}.
\end{equation}
Since $V(\hat f) > V(f_4)$, this concludes the proof of Lemma \ref{lem:lower_bounb_variation_f}.    

\begin{table}[h!] 
    \centering
    \begin{tabularx}{\linewidth}{|X|X|} 
        \hline
        \textbf{Assumption} & \textbf{Justification} \\ \hline
        $\Gamma$ of length $2\pi$ & The Ovals Conjecture is homogeneous with respect to dilation and thus $\lambda_\Gamma$ scales in a predictable way when changing the length of the curve\\ \hline
        $\Gamma$ planar, strictly convex and smooth & Minimizers of the loop problem meet these conditions as proven in \cite{Denzler:Existence}  \\ \hline
        $\phi' > 0$ & Invariance under reflection of $\Gamma$ \\ \hline
        $\phi(0) = 0$ & Invariance under re-parametrization of $\Gamma$ \\ \hline
        $t_\lambda = 0$, \ite $I(0) = I(\frac{\pi}{2}) = \lambda_\Gamma$ & Invariance under rotation of $\Gamma$, determined only up to rotations by $\frac\pi 2$ \\ \hline
        $I(t) > \lambda_\Gamma$ for $t \downarrow 0$, \ite all critical angles in $(0,\frac{\pi}{2}) \cup (\pi,3\frac{\pi}{2})$ & Further rotation of $\Gamma$ by $0$ or by $\frac\pi 2$, \ite rotation is now determined up to an angle of $\pi$ \\ \hline
        $f(t) > 0$ on $[\frac{\pi}{2},\pi]$ & Further rotation of $\Gamma$ by $0$ or $\pi$\\ \hline
    \end{tabularx}
    \caption{Assumptions on the curve $\Gamma$ and its parametrization.}
    \label{tab:assumptions}
\end{table}

\printbibliography 

\end{document}